\DeclareMathOperator{\lcm}{lcm}
\DeclareMathOperator{\rank}{rank}
\numberwithin{figure}{section}
\newtheorem{theorem}{Theorem}[section]
\newtheorem{lemma}[theorem]{Lemma}
\newtheorem{corollary}[theorem]{Corollary}
\newtheorem{proposition}[theorem]{Proposition}
\newtheorem{example}[theorem]{Example}
\newtheorem{remark}[theorem]{Remark}
\newtheorem{definition}[theorem]{Definition}
\begin{document}
%\linenumbers
\title{Some New Non-binary Quantum Codes from One-generator  Quasi-cyclic Codes}
\author{Tushar Bag$^{1,2}$, Hai Q. Dinh$^3$ , Daniel Panario$^4$}
\date{
\small{
1. Department of Mathematics, SRM University-AP, Amaravati, Andhra Pradesh 522240, India\\
2. Inria, ENS de Lyon, Lyon 1, LIP, 69342, Lyon cedex 07, France\\
3. Kent State University, Warren, OH 44483, USA\\
4. School of Mathematics and Statistics, Carleton University, Ottawa, Ontario K1S 5B6, Canada.
}
\today}

\begingroup
\renewcommand\thefootnote{}
\footnotetext{
Email: tusharbag2011@gmail.com (T. Bag) [corresponding author], 
hdinh@kent.edu (H. Q. Dinh), 
daniel@math.carleton.ca (D. Panario)
}
\endgroup

\maketitle

\begin{abstract}
This article studies one-generator and two-generator quasi-cyclic codes over finite fields. We present necessary and sufficient conditions for the symplectic self-orthogonality of one-generator quasi-cyclic codes in two equivalent forms: a matrix form and a polynomial form. We then extend these conditions to two-generator quasi-cyclic codes, providing analogous characterizations for both the symplectic self-orthogonality and the symplectic dual-containing property. Using these conditions, we construct new quantum error-correcting codes whose parameters improve upon the current best-known records.
\end{abstract}
\text{\bf Keywords:} \small{Quasi-cyclic codes (QCs),  quantum error-correcting codes (QECCs).}\\
\text{\bf Mathematics Subject Classifications(2010):} \small{ 94B05, 94B15, 94B60.}\\

\section{Introduction}
Quasi-cyclic (QC) codes are a prominent class of linear error-correcting codes. They possess a well-developed algebraic structure that generalizes  cyclic codes. This generalization allows greater flexibility in code design and enables the construction of asymptotically good codes  approaching the modified Gilbert--Varshamov bound \cite{Kas, L5}. The study of QC codes has yielded numerous record-breaking linear codes, particularly over small finite fields. Several key contributions have been made to the understanding of QC codes. Conan et al.\ \cite{Con} studied the structural properties of quasi-cyclic codes, providing both an enumeration of these codes and a characterization of their duals. Seguin \cite{Se} examined the properties of a specific class of one-generator QC codes. Ling and Sol\'e investigated the algebraic structure of QC codes in a series of articles \cite{L1, L2, L3, L4}. Lally et al.\ \cite{La} explored the structure and duals of arbitrary QC codes, with a particular focus on self-dual QC codes of index of 2. Aydin et al. \cite{Nuh1} investigated the structure of $1$-generator quasi-twisted codes and constructed new linear codes with improved parameters.

\vskip 3pt

Quantum error-correcting codes (QECCs) are essential for protecting quantum information from decoherence and quantum noise, playing a significant role in both quantum computing and communication. Quantum computers are expected to solve certain classes of problems significantly faster than classical computers. However, mitigating errors caused by decoherence and noise remains a critical challenge, and QECCs have emerged as a powerful tool for safeguarding quantum information in both communication and computation. The concept of quantum error correction was first proposed independently in \cite{cal1996, St, S95}.  The Calderbank--Shor--Steane (CSS) construction \cite{CRSS98} has served as the foundation for a large portion of subsequent research on QECCs.  Techniques for constructing non-binary quantum codes are explored in \cite{0, A, Ket}. 

\vskip 3pt

Symplectic self-orthogonal quasi-cyclic (QC) codes have proven to be an excellent family not only for constructing new linear codes but also for constructing numerous new binary QEECs. The study of QECCs construction from QC codes gained momentum following the work of Galindo et al.\ \cite{Gal}, where the authors studied a specific class of $2$-generator QC codes using Euclidean, Hermitian, and symplectic inner products. 
Subsequently, Lv et al.\ \cite{Q3, Q2} and Guan et al.\ \cite{Q4, Q1} constructed several record-breaking binary quantum codes by exploiting the symplectic structure of QC codes. Explicit dual generators of QC codes have been studied in \cite{QC, iitr}.

\vskip 3pt
The motivation for this work is to derive necessary and sufficient conditions for the symplectic self-orthogonality and the symplectic dual-containing property in a unified and simplified form that applies to general quasi-cyclic codes. While specific forms have been studied previously in the literature, we aim to subsume these as special cases, provide a simpler and more general formulation, and demonstrate that our approach yields new codes with record-breaking parameters.
\vskip 3pt

This paper is organized as follows. In Section 2, we present the basics of linear codes and quasi-cyclic codes. In Section 3, we study one-generator quasi-cyclic codes and present the symplectic self-orthogonality condition over finite fields. Section 4 focuses on two-generator quasi-cyclic codes, showing both the symplectic self-orthogonality condition and the symplectic dual-containing condition for the general form of two-generator quasi-cyclic codes over finite fields. In Section 5, we construct new quantum codes with record-breaking parameters based on our study. Finally in Section 6, we conclude our work giving further research problems.

\section{Preliminaries}

Let \(\mathbb{F}_q\) be the finite field with \(q = p^r\) elements, where \(p\) is a prime number and \(r\) is a positive integer. A linear code $C$ of length $2n$ over $\mathbb{F}_q$ is a subspace of the vector space $\mathbb{F}_q^{2n}$. The elements of \(C\) are called codewords. Suppose \(\mathbf{a} = (a_0, a_1, \dots, a_{2n-1})\) and \(\mathbf{b} = (b_0, b_1, \dots, b_{2n-1})\) are codewords of \(C\). 
The minimum Hamming weight of $C$ is defined as
\[w_H(C) = \min \{ w_H(\mathbf{a}) \mid \mathbf{a} \in C,\, \mathbf{a} \ne \mathbf{0} \},\]
where $w_H(\mathbf{a})$ denotes the number of nonzero components of $\mathbf{a}$.

The (minimum) Hamming distance of \(C\) is $ d_H(C) = \min \{ d_H(\mathbf{a}, \mathbf{b}) \mid \mathbf{a},\mathbf{b}\in C, \mathbf{a} \ne \mathbf{b} \},$
where \( d_H(\mathbf{a}, \mathbf{b}) = |\{ i \mid a_i \ne b_i \}|\).
\vskip 5pt
The symplectic inner product of \(\mathbf{u} = (\mathbf{u}_1, \mathbf{u}_2) \in \mathbb{F}_q^{2n}\) and \(\mathbf{v} = (\mathbf{v}_1, \mathbf{v}_2) \in \mathbb{F}_q^{2n}\) is defined as
\[ \langle \mathbf{u}, \mathbf{v} \rangle_s = \langle \mathbf{u}_1, \mathbf{v}_2 \rangle_e - \langle \mathbf{v}_1, \mathbf{u}_2 \rangle_e, \]
where \(\mathbf{u}_1, \mathbf{u}_2, \mathbf{v}_1, \mathbf{v}_2 \in \mathbb{F}_q^n\) and \(\langle \cdot, \cdot \rangle_e\) is the standard Euclidean inner product in \(\mathbb{F}_q^n\). We observe that this inner product can also be written as
\[ \langle \mathbf{u}, \mathbf{v} \rangle_s = \mathbf{u} \Omega \mathbf{v}^t, \]
where
\[ \Omega = \begin{pmatrix} O_n & I_n \\ -I_n & O_n \end{pmatrix}, \]
where \( I_n \) denotes the \( n \times n \) identity matrix, and \( O_n \) denotes the \( n \times n \) zero matrix.
\vskip 5pt
The symplectic dual code \(C^{\perp_s}\) of \(C\) is defined as $ C^{\perp_s} = \{ \mathbf{u} \in \mathbb{F}_q^{2n} \mid \langle \mathbf{u}, \mathbf{v} \rangle_s = 0, \text{ for all } \mathbf{v} \in C \}. $
A linear code \(C\) is called symplectic self-orthogonal if \(C \subseteq C^{\perp_s}\), and symplectic dual-containing if \(C^{\perp_s} \subseteq C\).
Let \(\mathbf{c} = (\mathbf{x}, \mathbf{y}) \in \mathbb{F}_q^{2n}\), where \(\mathbf{x}, \mathbf{y} \in \mathbb{F}_q^n\). We define the (minimum) symplectic weight of \(C\) as $ w_S(C) = \min \{ w_S(\mathbf{c}) \mid \mathbf{c} \in C, \mathbf{c} \ne 0 \}, $
where \(w_S(\mathbf{c}) = w_S(\mathbf{x}, \mathbf{y}) = |\{ i \mid (x_i, y_i) \ne (0,0) \}|\).

\vskip 3pt

A linear code $C$ of length $n$ over $\mathbb{F}_q$ is  a \emph{cyclic code} if for every codeword $(c_0,c_1,\dots,c_{n-1})\in C$, the cyclic shift $(c_{n-1},c_0,c_1,\dots,c_{n-2})$ is also in $C$. Let $R=\mathbb{F}_q[x]/\langle x^n-1 \rangle$. We can identify a codeword $(c_0,c_1,\dots,c_{n-1})\in C$ by a polynomial $c(x)=c_0+c_1x+\cdots+c_{n-1}x^{n-1}\in R$. It is easy to show that   $C$ is a cyclic code of length $n$ over $\mathbb F_q$ if it forms an ideal of $R$.

\begin{definition}
Suppose $C$ is a linear code of length $ln$ over $\mathbb F_q$. Any codeword ${\mathfrak c}=(c_{0,0},c_{0,1},\dots,\linebreak  c_{0,l-1},c_{1,0},\dots, c_{1,l-1}, \dots, c_{n-1,0},\dots,c_{n-1,l-1}) \in C$  can be written   as 

\[{\mathfrak c}=\begin{pmatrix}
c_{0,0} & c_{0,1} &\cdots &c_{0,l-1}\\
c_{1,0} & c_{1,1} &\cdots &c_{1,l-1}\\
\vdots & \vdots & \vdots &\vdots \\
c_{n-1,0} & c_{n-1,1} &\cdots &c_{n-1,l-1}\\
\end{pmatrix}.
\]
In this case,  $C$ is  a quasi-cyclic (QC) code of index $l$ if for any 
${\mathfrak c}\in C$, we get

\[\begin{pmatrix}
c_{n-l,0} & c_{n-l,1} &\cdots &c_{n-l,l-1}\\
c_{0,0} & c_{0,1} &\cdots &c_{0,l-1}\\
\vdots & \vdots & \vdots &\vdots \\
c_{n-l-1,0} & c_{n-l-1,1} &\cdots &c_{n-l-1,l-1}\\
\end{pmatrix}\in C.\]
\end{definition}

\section{One-generator QC codes}
A QC code of length $2n$ and index $2$ can be written as $C = (C_1, C_2)$, where $C_j$ is a cyclic code of length $n$ for $j=1,2$.
Suppose \(C_j\) is generated by a polynomial \(c_j(x)\) such that \(c_j(x) \mid x^n - 1\) for \(j = 1, 2\). Then, a one-generator QC code with index \(2\) can be interpreted as a 2-tuple of polynomials \((c_1(x), c_2(x))\).
\vskip 3pt
Any one-generator QC code of length \(2n\) and index \(2\) can be written as
\[
C = \big\{r(x)\big(c_1(x), c_2(x)\big) \mid r(x) \in R\big\} = \big\{\big(r(x)c_1(x), r(x)c_2(x)\big) \mid r(x) \in R\big\}.
\]

\begin{theorem}\label{gen}
Let $C$ be a one-generator QC code of length $2n$ and index $2$ over $\mathbb F_q$. Then a generator of $C$ is of the form $(r_1(x)g_1(x), r_2(x)g_2(x))$, 
where ${g_i}(x)h_i(x)= x^n -1$ and $\gcd\big(r_i(x),h_i(x)\big)=1$ for  $i = 1, 2$.
\end{theorem}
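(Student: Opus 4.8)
The plan is to analyze a single generator $(f_1(x), f_2(x))$ of $C$ componentwise, using the classical structure theory of cyclic codes of length $n$ over $\mathbb{F}_q$. First I would recall that if $C$ is generated as an $R$-submodule of $R^2$ by one element $(f_1(x), f_2(x))$, then projecting onto the $i$th coordinate shows that the cyclic code $\langle f_i(x)\rangle \subseteq R$ is generated by $f_i(x)$. Since every ideal of $R = \mathbb{F}_q[x]/\langle x^n-1\rangle$ is principal and generated by a divisor of $x^n-1$, there is a unique monic $g_i(x) \mid x^n-1$ with $\langle f_i(x)\rangle = \langle g_i(x)\rangle$; write $x^n-1 = g_i(x)h_i(x)$. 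The point is then to express $f_i(x)$ in terms of $g_i(x)$.

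Next I would use the fact that $f_i(x) \in \langle g_i(x)\rangle$ gives $f_i(x) = r_i(x) g_i(x)$ in $R$ for some $r_i(x)$, and conversely $g_i(x) \in \langle f_i(x)\rangle$ gives $g_i(x) = s_i(x) f_i(x) = s_i(x) r_i(x) g_i(x)$ in $R$, i.e. $g_i(x)\big(s_i(x) r_i(x) - 1\big) \equiv 0 \pmod{x^n-1}$. Dividing through by $g_i(x)$, this says $h_i(x) \mid s_i(x) r_i(x) - 1$, hence $s_i(x) r_i(x) \equiv 1 \pmod{h_i(x)}$, which forces $\gcd\big(r_i(x), h_i(x)\big) = 1$. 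This handles both conditions in the statement. Running this argument for $i = 1, 2$ and replacing the original generator $(f_1(x), f_2(x))$ by the equivalent description $(r_1(x)g_1(x), r_2(x)g_2(x))$ yields the claimed form. I would also note the converse direction: any pair $(r_1(x)g_1(x), r_2(x)g_2(x))$ with $g_i(x)h_i(x) = x^n-1$ and $\gcd(r_i, h_i) = 1$ does generate a one-generator QC code whose $i$th projection is exactly $\langle g_i(x)\rangle$, since the coprimality lets one recover $g_i(x)$ from $r_i(x)g_i(x)$ modulo $x^n-1$.

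The main obstacle, such as it is, lies in being careful about working modulo $x^n-1$ versus in $\mathbb{F}_q[x]$: the identities $f_i = r_i g_i$ and $g_i = s_i f_i$ hold in the quotient ring $R$, so each "equation" carries an implicit multiple of $x^n-1$, and one must divide by $g_i(x)$ cleanly in $\mathbb{F}_q[x]$ to pass to a congruence modulo $h_i(x)$. A secondary subtlety is that $r_i(x)$ is only well-defined modulo $h_i(x)$ (since $g_i(x)h_i(x) = x^n-1 \equiv 0$), so one should either fix a representative of degree less than $\deg h_i(x)$ or simply state the conclusion up to that ambiguity; this does not affect the coprimality condition, which depends only on the class of $r_i(x)$ modulo $h_i(x)$. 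Everything else is a direct application of the principal-ideal structure of $R$ and the correspondence between divisors of $x^n-1$ and cyclic codes.
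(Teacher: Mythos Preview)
Your proposal is correct and follows essentially the same route as the paper: project the generator onto each coordinate, use that every ideal of $R$ is principal and generated by a divisor of $x^n-1$, and then write each component as $r_i(x)g_i(x)$. In fact your argument is more complete than the paper's, which simply asserts the condition $\gcd(r_i(x),h_i(x))=1$ at the end without the justification you supply via $s_i(x)r_i(x)\equiv 1 \pmod{h_i(x)}$.
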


\begin{proof}
    Let \( C \) be a one-generator QC code of length \( 2n\) and index \( 2 \) over \( \mathbb{F}_q \), generated by \((c_1(x), c_2(x))\). Any element of \( C \) can be written in the form \(\big(s(x)c_1(x), s(x)c_2(x)\big)\) for some polynomial \( s(x) \in R \).
\vskip 3pt
For \( i = 1, 2 \), we define a map \(\Psi_i : C \longrightarrow R\) by
\[
\Psi_i\big(s(x)c_1(x), s(x)c_2(x)\big) = s(x)c_i(x).
\]

It is straightforward to verify that \(\Psi_i\) is a module homomorphism. 
Since \(\Psi_i(C)\) is the image of \( C \) under a module homomorphism, it forms an ideal of \( R \).
\vskip 3pt
Since every ideal of $R = \mathbb{F}_q[x] / \langle x^n - 1 \rangle$ corresponds to a cyclic code of length $n$ over $\mathbb{F}_q$, and $R$ is a principal ideal ring, the ideal $\Psi_i(C)$ is generated by a single polynomial $g_i(x)$ with $g_i(x) \mid x^n - 1$.
\vskip 3pt
Thus, the code \( C \) can be expressed as
\[
C = \big(r_1(x)g_1(x), r_2(x)g_2(x)\big) ,
\]
where \( g_i(x) h_i(x) = x^n - 1 \) and \(\gcd(r_i(x), h_i(x)) = 1\) for \( i = 1, 2 \). This completes the proof.
\end{proof}

\begin{definition}
Let $C$ be a one-generator QC code of length $2n$ and index $2$ over $\mathbb F_q$. Then the monic polynomial $h(x)$ of minimum degree such that
\[h(x)\big(r_1(x)g_1(x), r_2(x)g_2(x)\big)=(0,0)\]
\noindent is the parity-check polynomial of $C$.
\end{definition}

\begin{theorem}\label{gcd1}
Let $C=\big(r_1(x)g_1(x), r_2(x)g_2(x)\big)$ be a one-generator QC code  of length $2n$ and index $2$ over $\mathbb F_q$, where \( g_i(x) h_i(x) = x^n - 1 \) and \(\gcd(r_i(x), h_i(x)) = 1\) for \( i = 1, 2 \). Then $\dim(C)=\deg(h(x))$.

\end{theorem}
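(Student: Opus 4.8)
The plan is to realize $C$ as a quotient of $R$ by the annihilator of its generator and then read off the dimension. Write $\mathbf{g}(x) = \big(r_1(x)g_1(x),\, r_2(x)g_2(x)\big)$ and consider the $\mathbb{F}_q$-linear (in fact $R$-module) map $\varphi \colon R \to C$ given by $\varphi(s(x)) = s(x)\mathbf{g}(x)$. By the description $C = \{s(x)\mathbf{g}(x) : s(x) \in R\}$, the map $\varphi$ is surjective, so $C \cong R/\ker\varphi$ as $\mathbb{F}_q$-vector spaces and hence $\dim(C) = \dim_{\mathbb{F}_q}(R) - \dim_{\mathbb{F}_q}(\ker\varphi) = n - \dim_{\mathbb{F}_q}(\ker\varphi)$.

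Next I would pin down $\ker\varphi$. By definition $\ker\varphi = \mathrm{Ann}_R(\mathbf{g}(x)) = \{s(x) \in R : (x^n-1) \mid s(x)r_i(x)g_i(x) \text{ for } i = 1,2\}$, which is an ideal of $R$. Pulling it back along the quotient map $\mathbb{F}_q[x] \to R$ gives an ideal containing $\langle x^n-1\rangle$; since $\mathbb{F}_q[x]$ is a PID, this preimage equals $\langle h(x)\rangle$ for a unique monic $h(x)$, and $h(x) \mid x^n-1$ because $x^n-1$ belongs to it. Consequently $h(x)$ is exactly the monic polynomial of least degree with $h(x)\mathbf{g}(x) = (0,0)$, i.e. the parity-check polynomial of $C$ from the preceding definition, and $\ker\varphi = \langle h(x)\rangle$ inside $R$. (If an explicit form is wanted, $(x^n-1)\mid s\,r_i g_i$ is equivalent to $h_i \mid s\,r_i$, and $\gcd(r_i,h_i)=1$ then forces $h_i \mid s$; hence $h(x) = \lcm\big(h_1(x),h_2(x)\big)$. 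This is not needed for the dimension count but identifies the generator.)

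With $\ker\varphi = \langle h(x)\rangle$ and $h(x) \mid x^n-1$, the third isomorphism theorem gives $R/\langle h(x)\rangle = \big(\mathbb{F}_q[x]/\langle x^n-1\rangle\big)\big/\big(\langle h(x)\rangle/\langle x^n-1\rangle\big) \cong \mathbb{F}_q[x]/\langle h(x)\rangle$, whose $\mathbb{F}_q$-dimension is $\deg h(x)$. Combined with the first step this yields $\dim(C) = \deg h(x)$. Alternatively one may note that $\langle h(x)\rangle$, regarded as a cyclic code of length $n$ with generator polynomial $h(x)$, has dimension $n - \deg h(x)$, and subtract from $n$.

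The step I would be most careful with is the justification that the parity-check polynomial of the definition (the monic least-degree annihilator of $\mathbf{g}(x)$) generates the whole ideal $\mathrm{Ann}_R(\mathbf{g}(x))$ and divides $x^n-1$; this is where the principal-ideal structure of $\mathbb{F}_q[x]$, equivalently the standard structure theory of cyclic codes, does the real work. The remaining ingredients — the homomorphism theorems and the identity $\dim_{\mathbb{F}_q}\mathbb{F}_q[x]/\langle h\rangle = \deg h$ — are routine.
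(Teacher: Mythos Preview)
Your proposal is correct and follows essentially the same route as the paper: define the surjective $R$-module map $R\to C$, $s(x)\mapsto s(x)\mathbf{g}(x)$, identify its kernel as the principal ideal $(h(x))$ with $h(x)=\lcm(h_1(x),h_2(x))$, and conclude $C\cong R/(h(x))$ has dimension $\deg h(x)$. If anything, your version is more careful than the paper's, which simply asserts $\ker\Phi=(h(x))$ without spelling out where the hypotheses $\gcd(r_i,h_i)=1$ enter; your derivation of $h_i\mid s$ from $h_i\mid s\,r_i$ is precisely the place that assumption is needed.
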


\begin{proof}
We define a module homomorphism \(\Phi: R \rightarrow C\) by 
\[\Phi(a(x)) = a(x)\big(r_1(x)g_1(x), r_2(x)g_2(x)\big).
\]

Let \( h(x) = \mathrm{lcm}(h_1(x), h_2(x)) \) be the parity-check polynomial of \(C\). We note that \(\text{ker}(\Phi) = (h(x))\).
Since  \(\Phi\) is surjective, we obtain \( R/(h(x)) \cong C \), which implies  $\dim(C) =\deg(h(x))$.
\end{proof}
\begin{remark}{\em
An important observation regarding  Theorem \ref{gcd1} is that, without the conditions \linebreak
\(\gcd(r_i(x), h_i(x)) = 1\) for \(i = 1, 2\), we cannot assert that \(\dim(C) = \deg(h(x))\). 
 We show this with the following example.\hfill$\square$
}\end{remark}

\begin{example}{\em
We consider \(R = \mathbb{F}_3[x]/\langle x^{10} - 1 \rangle\), and 
\[x^{10}-1 = (x+1)(x+2)(x^4 + x^3 + x^2 + x + 1)(x^4 + 2x^3 + x^2 + 2x + 1)\in \mathbb{F}_3[x].\]
 We take  \[g_1(x) =(x+2)(x^4 + 2x^3 + x^2 + 2x + 1),~~g_2(x) = (x^4 + x^3 + x^2 + x + 1)(x^4 + 2x^3 + x^2 + 2x + 1),\]
\[r_1(x) = 2x^4 + 2x^3 + 2x^2 + 2x + 2,~~\mbox{and}~~
r_2(x) = 2x^5 + 2x^4 + x^3 + x^2 + 2x.\]
\vskip 3pt
We have that $g_i(x) \mid x^n - 1$, and $r_i(x)\in R$, for \(i = 1, 2\). 
Also, as \( g_i(x) h_i(x) = x^n - 1 \) for \( i = 1, 2 \), we get
\[h_1(x) = x^5 + 2x^4 + 2x^3 + 2x^2 + 2x + 1,~~\mbox{and}~~
h_2(x) = x^2 + 2.\]
\vskip 3pt
From the above we get, 
\[\gcd(r_1(x),h_1(x)) = x^4 + x^3 + x^2 + x + 1,~~\mbox{and}~~
\gcd(r_2(x),h_2(x)) = 1. \]
Then $h(x)=\lcm(h_1(x),h_2(x))=x^6 + x^5 + 2x + 2$, implying $\deg(h(x))=6$. 
\vskip 4pt
On the other hand, using MAGMA computations, we find that the dimension of the QC code generated by \((r_1(x)g_1(x), r_2(x)g_2(x))\) is 2. 
\hfill$\square$
}\end{example}

\begin{corollary}\cite{Nuh1}
    Let $C=(r_1(x)g(x), r_2(x)g(x))$ be a one-generator QC code of length $2n$ and index $2$ over $\mathbb F_q$. Then $\dim(C)=n-\deg(g(x))$. Moreover,  $d'(C) \ge 2d_H$, where $d'$ is the minimum Hamming distance of $C$ and $d_H$ is the minimum Hamming distance of $C_i$ for $i=1,2$. \hfill$\square$
\end{corollary}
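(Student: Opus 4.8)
The plan is to obtain both assertions from Theorem~\ref{gcd1} together with a description of the two component cyclic codes. First I would extract the canonical-form consequences of Theorem~\ref{gen}: since the generator is written with a common $g(x)$, i.e. $g_1(x)=g_2(x)=g(x)$, we get $h_1(x)=h_2(x)=h(x)=(x^n-1)/g(x)$, and we may assume $\gcd(r_i(x),h(x))=1$ for $i=1,2$. Using this coprimality, the $i$-th projection code is $C_i=\langle r_i(x)g(x)\rangle=\langle\gcd(r_i(x)g(x),x^n-1)\rangle=\langle g(x)\rangle$, so $C_1$ and $C_2$ are the \emph{same} cyclic code $\langle g(x)\rangle$ and indeed share one minimum Hamming distance $d_H$.

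For the dimension, I would invoke Theorem~\ref{gcd1} directly: the parity-check polynomial of $C$ is $\lcm(h_1(x),h_2(x))=\lcm(h(x),h(x))=h(x)$, hence $\dim(C)=\deg(h(x))=n-\deg(g(x))$.

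For the distance bound, take an arbitrary nonzero codeword $\mathfrak c=a(x)\big(r_1(x)g(x),r_2(x)g(x)\big)$ with $a(x)\in R$, and write its two blocks as $c^{(i)}(x)=a(x)r_i(x)g(x)\bmod(x^n-1)$. The key observation is that $c^{(i)}(x)=0$ in $R$ if and only if $(x^n-1)\mid a(x)r_i(x)g(x)$, i.e. $h(x)\mid a(x)r_i(x)$, i.e. (by $\gcd(r_i,h)=1$) $h(x)\mid a(x)$ — a condition not involving $i$. Thus either both blocks vanish, which forces $\mathfrak c=0$ and is excluded, or neither block vanishes. In the latter case each $c^{(i)}(x)$ is a nonzero codeword of $C_i=\langle g(x)\rangle$, so $w_H(c^{(i)})\ge d_H$; since the quasi-cyclic interleaving only permutes coordinates, $w_H(\mathfrak c)=w_H(c^{(1)})+w_H(c^{(2)})\ge 2d_H$. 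As $\mathfrak c$ was arbitrary, $d'(C)\ge 2d_H$.

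The main obstacle — really the only non-routine point — is ruling out codewords with exactly one zero block; this is precisely where $\gcd(r_i(x),h(x))=1$ (equivalently, working with the canonical generator from Theorem~\ref{gen}) is indispensable, since without it one could only conclude $d'(C)\ge d_H$. Everything else is bookkeeping: identifying the parity-check polynomial in the special case $g_1=g_2$, and noting that Hamming weight is insensitive to the coordinate interleaving.
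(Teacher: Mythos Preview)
Your argument is correct. Note, however, that the paper does not actually supply a proof of this corollary: it is stated with a citation to \cite{Nuh1} and closed with a $\square$, so there is nothing in the paper to compare against line by line. That said, your derivation is exactly the way the result becomes a \emph{corollary} in this paper's logical flow: the dimension statement is the specialization $g_1=g_2=g$ of Theorem~\ref{gcd1} (so that $\lcm(h_1,h_2)=h$ and $\dim C=\deg h=n-\deg g$), and the distance bound uses the coprimality hypothesis $\gcd(r_i,h)=1$ from Theorem~\ref{gen} to force both components of a nonzero codeword to be nonzero elements of $\langle g(x)\rangle$. Your identification of $C_1=C_2=\langle g(x)\rangle$ via $u r_i g\equiv g\pmod{x^n-1}$ is the right way to justify that both projections share the same minimum distance $d_H$, and your observation that the ``one block vanishes'' case is precisely what $\gcd(r_i,h)=1$ excludes is exactly the point that makes the bound $2d_H$ rather than $d_H$.
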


\begin{theorem}\label{mt}\cite{xu2022}
Let \( C \) be a linear code of length \( 2n \) over \( \mathbb{F}_q \) with generator matrix \( G \). Suppose \( G \) is an \( m \times 2n \) matrix. Then \( C \) is a symplectic self-orthogonal code if and only if \( G \Omega G^t = O_m \), where \( O_m \) is the \( m \times m \) zero matrix and \( G^t \) denotes the transpose of \( G \).
\end{theorem}

\begin{proof}
    Suppose \( C \) is a symplectic self-orthogonal, that is, \( C \subseteq C^{\perp_s} \). Then
    \begin{align*}
        C \subseteq C^{\perp_s}
       & \Longleftrightarrow \forall x \in C, \quad x \in C^{\perp_s}, \\
       & \Longleftrightarrow \forall x \in C, \quad G\Omega x^T = O_{m \times 1}, \\
       & \Longleftrightarrow \text{for all rows } r \text{ of } G, \quad G \Omega r^T = O_{m \times 1}, \\
       & \Longleftrightarrow G\Omega G^T = O_m.
    \end{align*}
\end{proof}

We recall a generator of a one-generator QC code of length \( 2n \) and index \( 2 \) over \( \mathbb{F}_q \) as \( C = (r_1(x)g_1(x), r_2(x)g_2(x)) \), and we denote  \( a(x) = r_1(x)g_1(x) \) and \( b(x) = r_2(x)g_2(x) \). We note that $a(x),  b(x) \in R$. Then a generator matrix of \( C \) can be expressed as \( G = (A \mid B) \), where \( A \) and \( B \) are \( n \times n \) circulant matrices generated by \( a(x) \) and \( b(x) \), respectively. Here, \( \mid \) denotes the horizontal concatenation of the two circulant matrices \( A \) and \( B \). Then we have the following result.

\begin{theorem}\label{t1}
Let \(C\) be a one-generator QC code of length $2n$ and index $2$ over \(\mathbb{F}_q\), whose generator matrix is \(G = (A \mid B)\). Then \(C \subseteq C^{\perp_s}\) if and only if \(AB^t = BA^t\), where \(A^t\) and \(B^t\) represent the transposes of \(A\) and \(B\), respectively.
\end{theorem}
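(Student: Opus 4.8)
The plan is to apply Theorem~\ref{mt} (the criterion $G\Omega G^t = O_m$) to the generator matrix $G = (A \mid B)$ and unwind what the block product gives. Writing $G = (A \mid B)$ with $A, B$ the $n\times n$ circulant matrices generated by $a(x)$ and $b(x)$, and using
\[
\Omega = \begin{pmatrix} O_n & I_n \\ -I_n & O_n \end{pmatrix},
\]
a direct block computation yields
\[
G\Omega G^t = (A \mid B)\begin{pmatrix} O_n & I_n \\ -I_n & O_n \end{pmatrix}\begin{pmatrix} A^t \\ B^t \end{pmatrix} = AB^t - BA^t.
\]
By Theorem~\ref{mt}, $C \subseteq C^{\perp_s}$ if and only if this matrix is the zero matrix, i.e. $AB^t = BA^t$. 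That is essentially the whole argument.

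First I would carry out the block multiplication above carefully, making explicit that $(A\mid B)\,\Omega = (-B \mid A)$, so that $(A\mid B)\,\Omega\,(A\mid B)^t = (-B\mid A)\begin{pmatrix}A^t\\ B^t\end{pmatrix} = -BA^t + AB^t$. Then I would invoke Theorem~\ref{mt} in both directions: since $G$ is a generator matrix for $C$ (its rows span $C$), the theorem says symplectic self-orthogonality of $C$ is equivalent to $G\Omega G^t = O_m$, and we have just shown the left-hand side equals $AB^t - BA^t$. Hence $C \subseteq C^{\perp_s} \iff AB^t - BA^t = O_n \iff AB^t = BA^t$.

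One point worth a sentence of care is that $G$ as written has $n$ rows, but its rank (hence $m$ in Theorem~\ref{mt}) may be strictly smaller than $n$; this causes no difficulty, because $G\Omega G^t = O$ for the $n\times 2n$ matrix $G$ is equivalent to the same identity for any row-reduced full-rank submatrix whose rows still span $C$ — deleting dependent rows deletes the corresponding rows and columns of $G\Omega G^t$, and conversely a dependent row contributes a dependent (hence zero, if the independent ones vanish) row to $G\Omega G^t$. So applying Theorem~\ref{mt} to the $n$-row matrix $G$ is legitimate.

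I do not anticipate a genuine obstacle here; the only thing to be vigilant about is bookkeeping with the sign in $\Omega$ and the transpose in the symplectic form, and the harmless rank issue just mentioned. The result is really just the matrix-criterion of Theorem~\ref{mt} specialized to the index-$2$ block structure.
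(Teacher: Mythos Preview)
Your proposal is correct and follows essentially the same route as the paper: apply Theorem~\ref{mt} to $G=(A\mid B)$, compute the block product $G\Omega G^t = -BA^t + AB^t$, and conclude. Your extra sentence justifying the use of the possibly rank-deficient $n\times 2n$ matrix $G$ in Theorem~\ref{mt} is a point the paper glosses over, but otherwise the arguments are identical.
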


\begin{proof}
    By Theorem \ref{mt}, \( C \subseteq C^{\perp_s} \) if and only if \( G \Omega G^t \) is a zero matrix. For this one-generator QC code, the generator matrix \( G \) is of the form \( G = (A \mid B) \). Thus, we have

\begin{align*}
G \Omega G^t &= \begin{pmatrix}
A & B 
\end{pmatrix}
\begin{pmatrix}
O_n & I_n \\
-I_n & O_n
\end{pmatrix}
\begin{pmatrix}
A & B 
\end{pmatrix}^t= \begin{pmatrix}
-B & A
\end{pmatrix}
\begin{pmatrix}
A^t  \\
B^t 
\end{pmatrix} =  - B A^t + A B^t.
\end{align*}
Therefore, $A B^t - B A^t = O_n$, implies $A B^t = B A^t $. Hence, $C\subseteq C^{\perp_s}$ if and only if $A B^t=B A^t$.
\end{proof}

We can also present Theorem \ref{t1} in terms of polynomials. To do so, we need to discuss the transpose of a polynomial, and its relation with the generator matrix described as follows.

\vskip 5pt

Let $t(x)=t_0 +t_1 x+t_2 x^2+\cdots+t_{n-2}x^{n-2}+t_{n-1}x^{n-1} \in \mathbb F_q[x]/\langle x^n-1 \rangle$. 
We define the transpose polynomial $\overline{t}(x)$ of $t(x)$ as 
\[\overline{t}(x) =x^n t(x^{-1}) =t_0 +t_{n-1} x+t_{n-2} x^2+\cdots+t_{2}x^{n-2}+t_{1}x^{n-1}. \]

\begin{lemma}
Let $b(x)=b_0+b_1x+\cdots+b_{n-1}x^{n-1}
\in \mathbb F_q[x]/\langle x^n-1\rangle,$
and let \(B\) be the circulant matrix generated by \(b(x)\).
Define
\[
\overline{b}(x)=x^n b(x^{-1})
=
b_0+b_{n-1}x+b_{n-2}x^2+\cdots+b_1x^{n-1}.
\]
Then the transpose matrix \(B^t\) is the circulant matrix generated by
\(\overline{b}(x)\).
\end{lemma}

\begin{proof}

The circulant matrix generated by $b(x)=b_0+b_1x+\cdots+b_{n-1}x^{n-1}$
is
\[
B=
\begin{pmatrix}
b_0 & b_1 & b_2 & \cdots & b_{n-1}\\
b_{n-1} & b_0 & b_1 & \cdots & b_{n-2}\\
b_{n-2} & b_{n-1} & b_0 & \cdots & b_{n-3}\\
\vdots & \vdots & \vdots & \ddots & \vdots\\
b_1 & b_2 & b_3 & \cdots & b_0
\end{pmatrix}.
\]
If rows and columns are indexed from \(0\) to \(n-1\), then $
B_{ij}=b_{j-i \pmod n}.$
Now we consider the transpose matrix \(B^t\). Its entries are
\[
(B^t)_{ij}=B_{ji}.
\]
Using the formula for the entries of \(B\), we get $
(B^t)_{ij}=b_{i-j \pmod n}.$ We write
\[
\overline{b}(x)
=
\sum_{k=0}^{n-1} c_k x^k,~\mbox{where}~ c_k=b_{-k \pmod n}.
\]
The circulant matrix generated by \(\overline{b}(x)\) has entries $c_{j-i}
=
b_{-(j-i)\pmod n}.$ Since
\[
-(j-i)\equiv i-j \pmod n,
\]
we get
\[
c_{j-i}=b_{i-j \pmod n}.
\]
Therefore, the \((i,j)\)-entry of the circulant matrix generated by
\(\overline{b}(x)\) is exactly $(B^t)_{ij}.$ Hence \(B^t\) is the circulant matrix generated by
\(\overline{b}(x)\).

\end{proof}

\begin{proposition} \label{prf0}
    Let \(A\) and \(B\) be circulant matrices generated respectively by $a(x), b(x)\in \mathbb F_q[x]/\langle x^n-1\rangle.$
Then the product  \(AB^t\) is the circulant matrix generated by $a(x)\overline{b}(x)\mod (x^n - 1).$
\end{proposition}

\begin{proof}
    Since \(A\) is generated by \(a(x)\), its entries satisfy $A_{ij}=a_{j-i \pmod n}.$
By the previous lemma, \(B^t\) is generated by \(\overline{b}(x)\), and
$(B^t)_{ij}=b_{i-j \pmod n}.$ Let $C=AB^t,$ the \((i,j)\)-entry of \(C\) is
\[
C_{ij}
=
\sum_{k=0}^{n-1}
A_{ik}(B^t)_{kj}.
\]
Substituting the formulas for the entries,
\[
C_{ij}
=
\sum_{k=0}^{n-1}
a_{k-i}\, b_{k-j},
\]
where all indices are taken modulo \(n\).

Let $m=k-i$, and set $r=j-i.$ Then
\[
k-j=(k-i)-(j-i)=m-r.
\]
Hence,
\[
C_{ij}
=
\sum_{m=0}^{n-1}
a_m b_{m-r}.
\]
Thus, \(C_{ij}\) depends only on \(r=j-i\), so \(C\) is circulant.

\vskip 5pt \noindent
Next, we compute the product $a(x)\overline{b}(x).$ Since $\overline{b}(x)
=\sum_{t=0}^{n-1} b_{-t}x^t,$ we have
\[
a(x)\overline{b}(x)
=
\left(\sum_{m=0}^{n-1} a_mx^m\right)
\left(\sum_{t=0}^{n-1} b_{-t}x^t\right).
\]
Multiplying we get 
\[
a(x)\overline{b}(x)
=
\sum_{m=0}^{n-1}\sum_{t=0}^{n-1}
a_m b_{-t} x^{m+t}.
\]
Modulo \(x^n-1\), the coefficient of \(x^r\) is
\[
\sum_{m+t~\equiv~ r \pmod n}
a_m b_{-t}.
\]
By setting $t=r-m,$ we get the coefficient of \(x^r\) as
\[
\sum_{m=0}^{n-1}a_m b_{m-r}.
\]
This is exactly the expression obtained for \(C_{ij}\) when
\(r=j-i\). Therefore the circulant matrix \(C=AB^t\) is generated by $a(x)\overline{b}(x)\pmod{x^n-1}.$
\end{proof}

\vskip 5pt
We present the following result on the symplectic self-orthogonality of a one-generator quasi-cyclic code in terms of the generator polynomials.

\begin{theorem}\label{prf}
Let \(C\) be a one-generator QC code of length $2n$ and index $2$ generated by \((a(x), b(x))\). Then \(C \subseteq C^{\perp_s}\) if and only if \(a(x)\overline{b}(x) - b(x)\overline{a}(x) \equiv 0 \mod (x^n - 1) \).
\end{theorem}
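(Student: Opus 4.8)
The plan is to reduce the polynomial identity to the matrix identity $AB^t = BA^t$ already established in Theorem~\ref{t1}, by understanding how the transpose operation on circulant matrices corresponds to the bar operation on polynomials. First I would recall that if $A$ is the $n\times n$ circulant matrix whose first row encodes $a(x)$, then matrix multiplication of circulants corresponds to polynomial multiplication in $R=\mathbb{F}_q[x]/\langle x^n-1\rangle$: the product $AB$ is the circulant associated with $a(x)b(x)\bmod (x^n-1)$. The key observation to pin down is that the transpose $A^t$ of the circulant generated by $a(x)$ is itself a circulant, and in fact it is the circulant generated by $\overline{a}(x) = x^n a(x^{-1})$. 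This is a direct computation: the $(i,j)$ entry of $A$ is $a_{j-i \bmod n}$, so the $(i,j)$ entry of $A^t$ is $a_{i-j\bmod n}$, which is exactly the coefficient pattern of $\overline{a}(x)$ as defined in the excerpt.

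Granting this correspondence, the argument is short. By Theorem~\ref{t1}, $C\subseteq C^{\perp_s}$ if and only if $AB^t = BA^t$. Translating each side through the circulant--polynomial dictionary: $AB^t$ is the circulant generated by $a(x)\overline{b}(x)\bmod(x^n-1)$, and $BA^t$ is the circulant generated by $b(x)\overline{a}(x)\bmod(x^n-1)$. Since the map sending $t(x)\in R$ to its associated circulant matrix is injective (a circulant is determined by its first row, which reads off the coefficients of $t(x)$), the matrix equality $AB^t = BA^t$ holds if and only if $a(x)\overline{b}(x) \equiv b(x)\overline{a}(x) \pmod{x^n-1}$, i.e.\ $a(x)\overline{b}(x) - b(x)\overline{a}(x)\equiv 0 \pmod{x^n-1}$. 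This is precisely the claimed condition.

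The main technical point to get right — and the only place where care is needed — is the identification $A^t \leftrightarrow \overline{a}(x)$, together with the commutativity/compatibility of all the operations involved: one needs $(AB)^t = B^t A^t$ (standard), the fact that the bar operation is an anti-homomorphism or rather that $\overline{ab} = \overline{a}\,\overline{b}$ in $R$ (which follows from $\overline{t}(x)=x^n t(x^{-1})$ and a degree bookkeeping modulo $x^n-1$), and that circulants commute so that the ordering on either side causes no trouble. I expect the degree bookkeeping in verifying $A^t$ is the circulant of $\overline{a}(x)$ — keeping the indices straight modulo $n$ and checking that $x^n a(x^{-1})$ is genuinely a polynomial representative in $R$ — to be the most error-prone step, but it is entirely routine. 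Everything else is a direct substitution into Theorem~\ref{t1}.
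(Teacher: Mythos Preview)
Your proposal is correct and follows essentially the same route as the paper's own proof: reduce to the matrix criterion $AB^t = BA^t$ of Theorem~\ref{t1}, identify $A^t$ with the circulant of $\overline{a}(x)$, and use the circulant--polynomial dictionary to translate the matrix equality into the polynomial congruence. If anything, your write-up is slightly more careful than the paper's in spelling out the index computation for $A^t\leftrightarrow \overline{a}(x)$ and in noting the injectivity of the polynomial-to-circulant map.
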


\begin{proof}
    By Theorem \ref{t1}, the condition \( C \subseteq C^{\perp_s} \) holds if and only if \( AB^t = BA^t \). We aim to express this condition using polynomials.
    By Proposition \ref{prf0}, the product matrix \( AB^t \) corresponds to the circulant matrix generated by \( a(x) \overline{b}(x) \mod (x^n - 1) \), and \( BA^t \) corresponds to the circulant matrix generated by \( b(x) \overline{a}(x) \mod (x^n - 1) \). Therefore, \(AB^t = BA^t\) if and only if \(a(x)\overline{b}(x) - b(x)\overline{a}(x) \equiv 0 \mod (x^n - 1)\).
\end{proof}

\vskip 3pt
Here, we present a detailed example explaining all the concepts discussed above.

\begin{example}{\em 
We consider $R=\mathbb F_3[x]/\langle x^{11}-1 \rangle$, and
\[x^{11}-1=(x+2)(x^5 + 2x^3 + x^2 + 2x + 2)(x^5 + x^4 + 2x^3 + x^2 + 2)\in \mathbb F_3[x].\]

We take 

\[g_1(x)=(x+2)(x^5 + x^4 + 2x^3 + x^2 + 2),~~\mbox{and}~~g_2(x)=x^5 + x^4 + 2x^3 + x^2 + 2.\] Then 
\[h_1(x)=(x^5 + 2x^3 + x^2 + 2x + 2),~~\mbox{and}~~h_2(x)=(x+2)(x^5 + 2x^3 + x^2 + 2x + 2).\] 

\noindent We also consider
\[r_1(x)=2x^8 + 2x^7 + 2x^6 + 2x^5 + 2x^4 + 2x^3 + 2x^2 + 2x + 2,~~\mbox{and}\]\[r_2(x) = 2x^7 + 2x^6 + 2x^5 + x^4 + x^3 + 2x^2 + x\in R.\]

\noindent Then  $C=(r_1(x)g_1(x), r_2(x)g_2(x))$ is a one-generator QC code of length $22$ over $\mathbb F_3$. We can check that $\gcd(r_i(x), h_i(x))= 1$  for  $i = 1, 2$. Then $\dim(C)=\deg(h(x))=6$, where \[h(x)=\mathrm{lcm}(h_1(x),h_2(x))=x^6 + 2x^5 + 2x^4 + 2x^3 + x^2 + 1.\]

As per Theorem \ref{t1}, we take $a(x)=r_1(x)g_1(x)\in R,$ and $b(x)=r_2(x)g_2(x)\in R$. Then $A$ is generated by $a(x)$, $B$ is generated by $b(x)$, $A^t$ is generated by $\overline{a}(x)$ and $B^t$ is generated by $\overline{b}(x)$, where 
\[a(x)=x^9 + x^5 + x^4 + x^3 + x + 1,\]
\[\overline{a}(x)=x^{10} + x^8 + x^7 + x^6 + x^2 + 1,\]
\[b(x)=2x^{10} + 2x^8 + 2x^7 + x^6 + x^5 + x^2 + x + 1,\]
\[\overline{b}(x)=x^{10} + x^9 + x^6 + x^5 + 2x^4 + 2x^3 + 2x + 1.\]

\noindent It is easy to check that $AB^t = BA^t$, also \(a(x)\overline{b}(x) - b(x)\overline{a}(x) \equiv 0 \mod (x^n - 1)\). Thus, \(C \subseteq C^{\perp_s}\).
\hfill$\square$
}\end{example}
\vskip 10pt
\begin{remark}
{\em 
A useful property of the transpose polynomial is the following. If $C$ is an $[n, \dim(C)]$ code with circulant generator matrix $G \in \mathbb{F}_q^{n \times n}$ generated by $g(x)$, then the code $C^t$ with generator matrix $G^t$ is also an $[n, \dim(C)]$ code, whose generator is $\overline{g}(x)$. This holds because $\dim(C) = \operatorname{rank}(G) = \operatorname{rank}(G^t) = \dim(C^t)$. Note that the circulant generator matrix need not have full rank.
\hfill$\square$
}\end{remark}

In the following result, we present a theorem that allows us to determine the dimension of a one-generator QC code without the $\gcd$ conditions as in Theorem \ref{gcd1}. A result for an index \( l \) QC code can be found in \cite{La}.

\begin{theorem}
Let \(C\) be a one-generator QC code of length $2n$ and index $2$ generated by \((a(x), b(x))\), where $a(x),b(x)\in R$. Then \(\dim(C) = n - \deg(f(x))\), where \(f(x)=\gcd\big(a(x),b(x),x^n-1\big)\).
\end{theorem}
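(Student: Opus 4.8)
The plan is to realize the one-generator QC code $C$ as the image of the module homomorphism $\Phi: R \to C$ given by $\Phi(s(x)) = s(x)(a(x),b(x))$, exactly as in the proof of Theorem~\ref{gcd1}, and then identify $\ker(\Phi)$ precisely. By the first isomorphism theorem $C \cong R/\ker(\Phi)$, so $\dim(C) = n - \deg(k(x))$ where $k(x)$ is the monic generator of the ideal $\ker(\Phi)$; the whole task reduces to showing $\ker(\Phi) = \langle f(x)\rangle$ with $f(x) = \gcd(a(x),b(x),x^n-1)$. Note this is just the parity-check polynomial $h(x)$ of Theorem~\ref{gcd1} under the more general hypotheses (no $\gcd$ condition), so in effect I am computing $\deg h(x)$ directly.

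First I would observe that $s(x) \in \ker(\Phi)$ iff $s(x)a(x) \equiv 0$ and $s(x)b(x) \equiv 0 \pmod{x^n-1}$. Each of these is an ideal of $R$: the set of $s(x)$ with $s(x)a(x)\equiv 0 \pmod{x^n-1}$ is the annihilator of $(a(x))$ in $R$, which is well known to be $\langle (x^n-1)/\gcd(a(x),x^n-1)\rangle$. So $\ker(\Phi)$ is the intersection of the two annihilator ideals $\langle (x^n-1)/d_a(x)\rangle$ and $\langle (x^n-1)/d_b(x)\rangle$, where $d_a(x) = \gcd(a(x),x^n-1)$ and $d_b(x)=\gcd(b(x),x^n-1)$. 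Since $R = \mathbb{F}_q[x]/\langle x^n-1\rangle$ is a principal ideal ring whose ideals correspond to divisors of $x^n-1$, and intersection of $\langle u(x)\rangle$ and $\langle v(x)\rangle$ (for $u,v \mid x^n-1$) is $\langle \lcm(u(x),v(x))\rangle$, I get $\ker(\Phi) = \langle \lcm\!\big((x^n-1)/d_a(x),\,(x^n-1)/d_b(x)\big)\rangle$.

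Next I would simplify this generator. Using $\lcm(u,v) = uv/\gcd(u,v)$ and the standard identity $\gcd\big((x^n-1)/d_a,(x^n-1)/d_b\big) = (x^n-1)/\lcm(d_a,d_b)$, one computes
\[
\lcm\!\Big(\frac{x^n-1}{d_a(x)},\frac{x^n-1}{d_b(x)}\Big) = \frac{x^n-1}{\gcd(d_a(x),d_b(x))} = \frac{x^n-1}{f(x)},
\]
where the last equality holds because $\gcd(d_a(x),d_b(x)) = \gcd(\gcd(a(x),x^n-1),\gcd(b(x),x^n-1)) = \gcd(a(x),b(x),x^n-1) = f(x)$. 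Hence $\ker(\Phi) = \langle (x^n-1)/f(x)\rangle$, and therefore $\dim(C) = n - \deg\big((x^n-1)/f(x)\big) = \deg(f(x))$\,---\,wait, that gives $\deg f$, so I must be careful: $R/\langle g(x)\rangle$ has dimension $\deg g(x)$ when $g \mid x^n-1$, so $\dim(C) = \deg\big((x^n-1)/f(x)\big) = n - \deg(f(x))$, matching the claim.

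The main obstacle I anticipate is purely bookkeeping: justifying cleanly that the annihilator of $(a(x))$ in $R$ is $\langle (x^n-1)/\gcd(a(x),x^n-1)\rangle$ and that intersections of ideals correspond to $\lcm$'s of the associated divisors of $x^n-1$. Both are standard facts about the principal ideal ring $\mathbb{F}_q[x]/\langle x^n-1\rangle$ (one may reduce to the CRT decomposition into fields when $\gcd(n,q)=1$, but the divisor-lattice argument works in general since $x^n-1$ need not be squarefree), but they should be stated carefully. There is no deep difficulty; the content is the gcd/lcm identity $\gcd\big((x^n-1)/d_a,(x^n-1)/d_b\big) = (x^n-1)/\gcd(d_a,d_b)$, which follows factor-by-factor from unique factorization in $\mathbb{F}_q[x]$.
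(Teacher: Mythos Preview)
Your argument is correct. You compute $\ker\Phi$ as the intersection $\mathrm{Ann}(a)\cap\mathrm{Ann}(b)$ of annihilator ideals in $R$, simplify via the divisor-lattice identities $\mathrm{Ann}(a)=\langle(x^n-1)/d_a\rangle$ and $\lcm\big((x^n-1)/d_a,(x^n-1)/d_b\big)=(x^n-1)/\gcd(d_a,d_b)=(x^n-1)/f$, and conclude $\dim C=\deg\big((x^n-1)/f(x)\big)=n-\deg f(x)$. (The slip in your opening plan, writing $\ker\Phi=\langle f\rangle$, is harmless since you correct it in the body.)

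The paper takes a different route: rather than analyzing $\ker\Phi$, it shows that the ideal $\langle a(x),b(x)\rangle\subset R$ equals $\langle f(x)\rangle$ via a two-sided inclusion, and then reads off the dimension $n-\deg f(x)$ of that cyclic code. The identification of $C$ with $\{a(x)p(x)+b(x)q(x)\mid p,q\in R\}$ is, strictly speaking, a dimension statement rather than an equality of sets (one side lives in $R^2$, the other in $R$); the implicit justification is that $C\subset R^2$ and $\langle f\rangle\subset R$ are both isomorphic to $R/\mathrm{Ann}(f)$. Your kernel computation is precisely what makes this explicit, since $\mathrm{Ann}(a)\cap\mathrm{Ann}(b)=\mathrm{Ann}(\langle a,b\rangle)=\mathrm{Ann}(f)$. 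So your approach is a bit longer but more self-contained, while the paper's is shorter and leans on the same annihilator identification without spelling it out.
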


\begin{proof}
Let $f(x) = \gcd(a(x), b(x), x^n - 1)$.  The code $C$ consists of all codewords whose 
polynomial representations are of the form:
\[
C = \{a(x)p(x) + b(x)q(x) \mid p(x), q(x) \in R\},
\]
so as an ideal of $R$, we have $C = (a(x), b(x))$. We claim that 
$(a(x), b(x)) = (f(x))$ in $R$.

\vskip 5pt
 Since 
$f(x) = \gcd(a(x), b(x), x^n - 1)$, by Bezout's identity there exist polynomials 
$s(x), t(x), w(x) \in \mathbb{F}_q[x]$ such that
\[
f(x) = s(x)a(x) + t(x)b(x) + w(x)(x^n - 1).
\]
Reducing modulo $x^n - 1$, we obtain
\[
f(x) \equiv s(x)a(x) + t(x)b(x) \mod (x^n - 1),
\]
which means $f(x) \in (a(x), b(x))$ in $R$. Hence $(f(x)) \subseteq (a(x), b(x))$.

\vskip 5pt
On the other hand, as  $f(x)$ divides both 
$a(x)$ and $b(x)$ in $\mathbb{F}_q[x]$, we can write $a(x) = f(x)\alpha(x)$ and 
$b(x) = f(x)\beta(x)$ for some $\alpha(x), \beta(x) \in \mathbb{F}_q[x]$. Then 
for any $d(x) \in (a(x), b(x))$, there exist $p(x), q(x) \in R$ such that
\begin{align*}
d(x) &= a(x)p(x) + b(x)q(x) \\
     &= f(x)\alpha(x)p(x) + f(x)\beta(x)q(x) \\
     &= f(x)\bigl(\alpha(x)p(x) + \beta(x)q(x)\bigr).
\end{align*}
Thus $d(x) \in (f(x))$, and therefore $(a(x), b(x)) \subseteq (f(x))$.

\vskip 5pt
Combining both inclusions, we conclude that $(a(x), b(x)) = (f(x))$ in $R$, 
and hence $C = (f(x))$ as an ideal of $R$. Since $f(x)$ divides $x^n - 1$, 
the ideal $(f(x))$ in $R$ is a cyclic code of length $n$ with generator polynomial 
$f(x)$. The dimension of a cyclic code generated by a divisor $f(x)$ of $x^n - 1$ 
is $n - \deg(f(x))$. Therefore,
\[
\dim(C) = n - \deg(f(x)). \qedhere
\]
\end{proof}

\begin{remark}
{\em There are no nontrivial symplectic dual-containing one-generator quasi-cyclic codes. Indeed, the dimension of a one-generator QC code of length $2n$ and index $2$ equals $n - \deg(f(x))$, and consequently the dimension of its symplectic dual is $n + \deg(f(x))$. For the dual-containing condition $C^{\perp_s} \subseteq C$ to hold, we need $\dim(C^{\perp_s}) \le \dim(C)$, i.e., $n + \deg(f(x)) \le n - \deg(f(x))$, which forces $\deg(f(x)) \le 0$, meaning $f(x)$ is a nonzero constant and $C$ is the trivial code $\mathbb{F}_q^{2n}$.\hfill$\square$
}\end{remark}

\section{Two-generators QC codes}
In this section, we present two-generator QC codes, along with the necessary and sufficient conditions for self-orthogonality and the dual-containing property.

In the previous section, we studied one-generator QC codes of the form $(a_1(x), b_1(x))$, where $a_1(x)=r_1(x)g_1(x)$ and $b_1(x)=r_2(x)g_2(x)$, with $g_i(x) \mid x^n - 1$ and $r_i(x) \in R$ for $i = 1, 2$. Similarly, we now introduce two-generator QC codes, where the generators are of the form $(a_1(x), b_1(x))$ and $(a_2(x), b_2(x))$. These generators are defined as follows:
\begin{align}\label{2g1}
    a_1(x)=t_1(x)g_1(x),~~~ b_1(x)=t_2(x)g_2(x),~~~ a_2(x)=t_3(x)g_3(x),~~~ b_2(x) = t_4(x)g_4(x),
\end{align}
where \(a_i(x), b_i(x) \in R\), $g_j(x) \mid x^n - 1$ and $t_j(x) \in R$ for $j = 1, 2, 3, 4$.
 
\vskip 5pt
In this two-generator setting, dealing with four divisors $g_i(x)$ of $x^n-1$ and four auxiliary polynomials $t_i(x) \in R$ can be quite involved. 
Therefore, some special forms have been considered for study. For example, in \cite{Gal}, \cite{Q2}, and \cite{Q4} consecutively, the generators are considered as follows:

\[
 \begin{pmatrix}
f(x) & h(x)f(x)\\
0 & g(x)
\end{pmatrix},
~~\begin{pmatrix}
g_1(x) & g_1(x)\\
g_2(x) & u(x)g_2(x)
\end{pmatrix},~~\mbox{and}~~
\begin{pmatrix}
v(x)g_1(x) & g_1(x)\\
g_2(x) & v(x)g_2(x)
\end{pmatrix}.
\]

\vskip 5pt
In this work, we aim to present a self-orthogonality and dual-containing condition that applies to any two-generator QC codes and can be viewed as a continuation of the one-generator case study. To achieve this, we consider the generator matrix corresponding to the generator \((a_1(x), b_1(x))\) as \(G_1 = (A_1 \mid B_1)\) and for \((a_2(x), b_2(x))\) as \(G_2 = (A_2 \mid B_2)\), where \(A_i\) are circulant matrices generated by the polynomial \(a_i(x)\) for \(i=1,2\), and \(B_i\) are circulant matrices generated by the polynomial \(b_i(x)\) for \(i=1,2\). A generator matrix of the two-generator QC code is then constructed as follows:

\begin{align}\label{2gm}
    G =\begin{pmatrix}
G_1\\
G_2
\end{pmatrix}= \begin{pmatrix}
A_1 & B_1\\
A_2 & B_2
\end{pmatrix}.
\end{align}

Next we give the dimension formula for a  two-generator QC code of length \(2n\) and index \(2\) over \(\mathbb{F}_q\).

\begin{theorem}\label{dim2}
Let \( C \) be a two-generator QC code  of length \(2n\) and index \(2\) over \(\mathbb{F}_q\), with generator matrix \( G \) of the form 
$(\ref{2gm})$ given by

\[ G =\begin{pmatrix}
G_1\\
G_2
\end{pmatrix}.\]
Then, $\dim(C) = \rank(G) = \rank(G_1) + \rank(G_2) - \dim(\text{row space}(G_1) \cap \text{row space}(G_2)).$
\end{theorem}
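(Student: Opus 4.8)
The statement is a general fact about the rank of a block matrix $G = \begin{pmatrix} G_1 \\ G_2 \end{pmatrix}$ built by stacking two matrices with the same number of columns, combined with the observation that $\dim(C) = \rank(G)$ for any generator matrix $G$ of $C$. So the plan is to prove the purely linear-algebraic identity
\[
\rank(G) = \rank(G_1) + \rank(G_2) - \dim\bigl(\text{row space}(G_1) \cap \text{row space}(G_2)\bigr),
\]
which holds over any field, and then simply note that the row space of $G$ is the sum of the row spaces of $G_1$ and $G_2$, hence equals $C$, giving $\dim(C) = \rank(G)$.

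First I would set $V_1 = \text{row space}(G_1)$ and $V_2 = \text{row space}(G_2)$, two subspaces of $\mathbb{F}_q^{2n}$. The key point is that $\text{row space}(G)$, i.e. the span of all rows of $G_1$ together with all rows of $G_2$, is exactly $V_1 + V_2$. Therefore $\rank(G) = \dim(V_1 + V_2)$, while $\rank(G_1) = \dim V_1$ and $\rank(G_2) = \dim V_2$. Now I would invoke the standard dimension formula for the sum of two subspaces of a finite-dimensional vector space, namely $\dim(V_1 + V_2) = \dim V_1 + \dim V_2 - \dim(V_1 \cap V_2)$, which can itself be proved by extending a basis of $V_1 \cap V_2$ to bases of $V_1$ and of $V_2$ and checking that the union is a basis of $V_1 + V_2$. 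Substituting back yields the claimed formula.

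To close the argument I would observe that $C$ is, by definition of a generator matrix, the row space of $G$, so $\dim(C) = \rank(G)$; substituting the identity just established gives the statement. One should also remark that $G$ as displayed in $(\ref{2gm})$ is indeed a generating matrix for the two-generator QC code $C$, since every codeword is an $\mathbb{F}_q[x]/\langle x^n-1\rangle$-combination of $(a_1(x),b_1(x))$ and $(a_2(x),b_2(x))$, and the cyclic shifts of these two generators are precisely the rows of $G_1$ and $G_2$; hence $C = V_1 + V_2 = \text{row space}(G)$.

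There is essentially no obstacle here: the proof is a one-line application of the subspace dimension formula once one identifies $\text{row space}(G) = \text{row space}(G_1) + \text{row space}(G_2)$. The only mild subtlety worth spelling out is that the circulant blocks $A_i, B_i$ need not have full rank, so $\rank(G_i)$ is genuinely $\dim V_i$ and not automatically $n$; but this does not affect the argument, which is uniform in the ranks. I would therefore keep the write-up short, stating the subspace identity, citing (or briefly proving) the dimension formula for $V_1 + V_2$, and concluding.
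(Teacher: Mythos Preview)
Your proposal is correct and supplies exactly the standard linear-algebra argument that the paper omits: the paper states Theorem~\ref{dim2} without proof, and your identification of $\text{row space}(G)=V_1+V_2$ followed by the dimension formula $\dim(V_1+V_2)=\dim V_1+\dim V_2-\dim(V_1\cap V_2)$ is the natural justification. Your added remark that the rows of $G_1$ and $G_2$ are precisely the cyclic shifts of the two generators, so that the row space of $G$ is indeed $C$, is a useful point to make explicit, and your caveat about the circulant blocks not having full rank matches the paper's later Remark that $G$ ``may not always have full rank.''
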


\begin{remark}{\em 
The result of Theorem \ref{dim2} can also be expressed as 
\[\dim(C) = \dim(C_1) + \dim(C_2) - \dim(C_1 \cap C_2),\]
where we can think \( C_i \) as  one-generator QC codes generated by \( G_i \), for \( i = 1, 2 \). We observed that 
\[\dim(C_i) = \deg(f_i(x)),~\mbox{where}~f_i(x)=\gcd\big(a_i(x), b_i(x), x^n - 1\big),~\mbox{for} ~i=1,2.\]

So far, we have been unable to establish a polynomial degree formula for \(\dim(C_1 \cap C_2)\) in terms of the polynomials considered here. Addressing this issue likely demands further investigation and a more detailed exploration of the polynomials, which we plan to undertake in a future project concentrating on the explicit dual construction of two-generator QC codes.
\hfill$\square$
}\end{remark}

\subsection{Self-orthogonal QC codes}
\begin{theorem}\label{gproof}
Let \(C\) be a two-generator QC code of length \(2n\) and index \(2\) generated by \((a_1(x), b_1(x))\) and \((a_2(x), b_2(x))\), where \(a_i(x), b_i(x) \in R\) and are of the form $(\ref{2g1})$ for \(i = 1, 2\). A generator matrix of this QC code \(C\) is of the form $(\ref{2gm})$. Then \(C \subseteq C^{\perp_s}\) if and only if the following conditions hold:
\[
A_1 B_1^t = B_1 A_1^t, \quad A_2 B_2^t = B_2 A_2^t, \quad \text{and} \quad A_1 B_2^t = B_1 A_2^t,
\]
where \(A_i^t\) and \(B_i^t\) denote the transposes of \(A_i\) and \(B_i\), respectively, for $i=1,2$.
\end{theorem}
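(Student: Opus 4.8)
The plan is to reduce the statement to the matrix criterion of Theorem \ref{mt} and then unpack the resulting block product. By Theorem \ref{mt}, $C \subseteq C^{\perp_s}$ if and only if $G\Omega G^t = O_{2n}$, where $G$ is the $2n \times 2n$ generator matrix of the form (\ref{2gm}). So the first step is simply to compute $G\Omega G^t$ with $G = \begin{pmatrix} A_1 & B_1 \\ A_2 & B_2 \end{pmatrix}$ and $\Omega = \begin{pmatrix} O_n & I_n \\ -I_n & O_n \end{pmatrix}$, exactly mirroring the computation already done in the proof of Theorem \ref{t1} but now with $2\times 2$ blocks of $n\times n$ matrices.

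Carrying out that multiplication, $G\Omega = \begin{pmatrix} -B_1 & A_1 \\ -B_2 & A_2 \end{pmatrix}$, and then
\[
G\Omega G^t = \begin{pmatrix} -B_1 & A_1 \\ -B_2 & A_2 \end{pmatrix}\begin{pmatrix} A_1^t & A_2^t \\ B_1^t & B_2^t \end{pmatrix} = \begin{pmatrix} A_1B_1^t - B_1A_1^t & A_1B_2^t - B_1A_2^t \\ A_2B_1^t - B_2A_1^t & A_2B_2^t - B_2A_2^t \end{pmatrix}.
\]
Setting this block matrix equal to $O_{2n}$ gives four scalar (block) equations: the two diagonal ones are $A_1B_1^t = B_1A_1^t$ and $A_2B_2^t = B_2A_2^t$, and the two off-diagonal ones are $A_1B_2^t = B_1A_2^t$ and $A_2B_1^t = B_2A_1^t$. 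The second step is to observe that the lower-left equation is the transpose of the upper-right one: taking transposes of $A_1B_2^t = B_1A_2^t$ yields $B_2A_1^t = A_2B_1^t$, so the two off-diagonal conditions are equivalent and collapse to the single condition $A_1B_2^t = B_1A_2^t$. This leaves precisely the three stated conditions.

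For the converse direction, I would just note that the same block computation runs in reverse: if the three conditions hold, then (using again that $A_2B_1^t = B_2A_1^t$ follows by transposing $A_1B_2^t = B_1A_2^t$) every block of $G\Omega G^t$ vanishes, so $G\Omega G^t = O_{2n}$ and Theorem \ref{mt} gives $C \subseteq C^{\perp_s}$. There is no real obstacle here; the only point requiring a line of care is the symmetry argument identifying the lower-left block equation with the transpose of the upper-right one, so that one does not mistakenly list four conditions instead of three. I would also remark in passing that, as in Theorem \ref{t1}, each of these three matrix identities can be restated in polynomial form — $a_1(x)\overline{b}_1(x) \equiv b_1(x)\overline{a}_1(x)$, $a_2(x)\overline{b}_2(x) \equiv b_2(x)\overline{a}_2(x)$, and $a_1(x)\overline{b}_2(x) \equiv b_1(x)\overline{a}_2(x)$ modulo $x^n-1$ — using that the product of a circulant generated by $a(x)$ with the transpose of a circulant generated by $b(x)$ is the circulant generated by $a(x)\overline{b}(x) \bmod (x^n-1)$, though the matrix version alone suffices to prove the theorem as stated.
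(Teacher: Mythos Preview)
Your proof is correct and follows essentially the same approach as the paper: invoke Theorem \ref{mt}, compute $G\Omega G^t$ block-wise, and use the transpose relation between the off-diagonal blocks to reduce four conditions to three. Your write-up is in fact a bit more explicit about the converse direction and the transpose symmetry than the paper's own proof.
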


\begin{proof}
    By Theorem \ref{mt}, $C$ is a symplectic self-orthogonal code, if and only if  $G\Omega G^t $ is a zero matrix. Here $C$ is a two-generator  QC code of length $2n$ over $\mathbb F_q$, whose generator matrix is  $G$ is of the form $(2)$. Then

\begin{align*}
G \Omega G^t &= \begin{pmatrix}
A_1 & B_1 \\
A_2 & B_2
\end{pmatrix}
\begin{pmatrix}
O_n & I_n \\
-I_n & O_n
\end{pmatrix}
\begin{pmatrix}
A_1^t & A_2^t \\
B_1^t & B_2^t
\end{pmatrix} \\
&= \begin{pmatrix}
-B_1 & A_1 \\
-B_2 & A_2
\end{pmatrix}
\begin{pmatrix}
A_1^t & A_2^t \\
B_1^t & B_2^t
\end{pmatrix} \\
&= \begin{pmatrix}
-B_1A_1^t + A_1B_1^t & -B_1A_2^t + A_1B_2^t\\
-B_2A_1^t + A_2B_1^t & -B_2A_2^t + A_2B_2^t
\end{pmatrix} .
\end{align*}
Therefore,
\[G \Omega G^t = \begin{pmatrix}
-B_1A_1^t + A_1B_1^t & -B_1A_2^t + A_1B_2^t\\
-B_2A_1^t + A_2B_1^t & -B_2A_2^t + A_2B_2^t
\end{pmatrix} = \begin{pmatrix}
O_n & O_n\\
O_n & O_n
\end{pmatrix}.\]

\noindent
Comparing both sides, we obtain the result.
\end{proof}

\begin{remark}{\em
By Theorem~\ref{t1}, the conditions $A_1 B_1^t = B_1 A_1^t$ and $A_2 B_2^t = B_2 A_2^t$ are precisely the symplectic self-orthogonality conditions for the constituent one-generator QC codes $(a_1(x), b_1(x))$ and $(a_2(x), b_2(x))$, respectively. The additional condition $A_1 B_2^t = B_1 A_2^t$ enforces a cross-orthogonality between the two constituent codes. In polynomial terms, by Theorem~\ref{prf}, this is equivalent to $a_1(x)\overline{b}_2(x) - b_1(x)\overline{a}_2(x) \equiv 0 \pmod{x^n-1}$. Thus, the symplectic self-orthogonality of a two-generator QC code reduces to three independent polynomial congruences, each of which can be checked efficiently via polynomial arithmetic in $R$.
\hfill$\square$}\end{remark}

Similar to the one-generator case, we also present an alternative criterion for the symplectic self-orthogonality condition in terms of the generator polynomials for the two-generator QC codes.

\begin{theorem}
Let $C$ be a   two-generator  QC code  of length $2n$ and index $2$ generated by $(a_1(x),b_1(x))$ and $(a_2(x),b_2(x))$,
 where \(a_i(x), b_i(x) \in R\) and are of the form $(\ref{2g1})$ for \(i = 1, 2\). Then $C\subseteq C^{\perp_s}$ if and only if $a_1(x)\overline{b}_1(x)- b_1(x)\overline{a}_1(x) \equiv 0 \mod (x^n - 1)$, $a_2(x)\overline{b}_2(x)-b_2(x)\overline{a}_2(x) \equiv 0 \mod (x^n - 1)$ and $a_1(x)\overline{b}_2(x)-b_1(x)\overline{a}_2(x) \equiv 0 \mod (x^n - 1)$. 
\end{theorem}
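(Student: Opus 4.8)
The plan is to derive this from the matrix criterion of Theorem~\ref{gproof} and then translate each of its three matrix identities into a polynomial congruence, exactly as was done for the one-generator case in the proof of Theorem~\ref{prf}. First I would invoke Theorem~\ref{gproof}: writing the generator matrix of $C$ in the form $(\ref{2gm})$, we have $C\subseteq C^{\perp_s}$ if and only if $A_1B_1^t=B_1A_1^t$, $A_2B_2^t=B_2A_2^t$, and $A_1B_2^t=B_1A_2^t$. So it is enough to show that each of these three matrix equalities is equivalent to the corresponding displayed polynomial congruence.

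Next I would recall the dictionary between $n\times n$ circulant matrices over $\mathbb{F}_q$ and the ring $R=\mathbb{F}_q[x]/\langle x^n-1\rangle$: the map sending a polynomial to the circulant matrix it generates is a ring isomorphism onto the (commutative) ring of circulants, hence additive and multiplicative, and circulant matrices commute under multiplication. Under this isomorphism $A_i$ corresponds to $a_i(x)$ and $B_i$ to $b_i(x)$; moreover, as recorded just before Theorem~\ref{prf}, the transpose of the circulant generated by $p(x)$ is the circulant generated by $\overline{p}(x)=x^n p(x^{-1})$, so $A_i^t$ corresponds to $\overline{a}_i(x)$ and $B_i^t$ to $\overline{b}_i(x)$. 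Consequently $A_1B_1^t$, $B_1A_1^t$, $A_2B_2^t$, $B_2A_2^t$, $A_1B_2^t$, $B_1A_2^t$ are the circulants generated respectively by $a_1(x)\overline{b}_1(x)$, $b_1(x)\overline{a}_1(x)$, $a_2(x)\overline{b}_2(x)$, $b_2(x)\overline{a}_2(x)$, $a_1(x)\overline{b}_2(x)$, $b_1(x)\overline{a}_2(x)$, all taken modulo $x^n-1$. Since the circulant--polynomial correspondence is a bijection, each matrix identity holds if and only if the two associated polynomials agree modulo $x^n-1$, which yields precisely the three stated congruences, both implications coming for free.

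The only step needing a little attention is the mixed, off-diagonal condition $A_1B_2^t=B_1A_2^t$, which is not symmetric: one must check that the correspondence behaves correctly for a product of one circulant with the transpose of a \emph{different} one. This is fine because multiplication of circulants is commutative and matches multiplication in $R$, so $A_1B_2^t$ and $B_1A_2^t$ still correspond to $a_1(x)\overline{b}_2(x)$ and $b_1(x)\overline{a}_2(x)$ modulo $x^n-1$. As a consistency check one may apply the bar operation to $a_1(x)\overline{b}_2(x)-b_1(x)\overline{a}_2(x)$ and use $\overline{\overline{p}}=p$ together with $\overline{pq}\equiv\overline{p}\,\overline{q}\bmod (x^n-1)$ to get $\overline{a}_1(x)b_2(x)-\overline{b}_1(x)a_2(x)$, which is exactly the polynomial attached to the transposed identity $(-B_2A_1^t+A_2B_1^t)^t=-B_1A_2^t+A_1B_2^t$ noted in the proof of Theorem~\ref{gproof}; so no fourth congruence is needed. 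I do not expect any genuine obstacle here: the argument is bookkeeping built on Theorems~\ref{gproof} and~\ref{prf}, the only mild subtlety being the non-symmetric mixed term just discussed.
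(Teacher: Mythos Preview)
Your proposal is correct and follows essentially the same approach as the paper: the paper simply states that the proof proceeds by the same line of arguments as Theorem~\ref{prf}, i.e., it translates the three matrix identities of Theorem~\ref{gproof} into polynomial congruences via the circulant--polynomial correspondence. Your write-up is in fact more detailed than the paper's, and your observation about the mixed condition and the bar-involution is a nice consistency check but not needed for the argument.
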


\begin{proof}
   This proof follows a similar line of arguments as the proof of Theorem \ref{prf}.
\end{proof}
\subsection{Dual-containing QC codes}
We have examined the symplectic self-orthogonality condition \( C \subseteq C^{\perp_s} \) for two-generator QC codes. Similarly, we can derive a necessary and sufficient condition for the symplectic dual-containing property \( C^{\perp_s} \subseteq C \). Before proceeding, we need the following result.

\begin{theorem}\label{mt2}
Let \( C \) be a linear code of length \( 2n \) over \( \mathbb{F}_q \) with a parity-check matrix \( H \). Suppose \( H \) is an \( m \times 2n \) matrix. Then \( C \) is a symplectic dual-containing code if and only if \( H \Omega H^t = O_m \), where \( O_m \) is the \( m \times m \) zero matrix and \( H^t \) denotes the transpose of \( H \).
\end{theorem}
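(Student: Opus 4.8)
The plan is to mirror the proof of Theorem \ref{mt} almost verbatim, exploiting the duality between generator matrices and parity-check matrices under the symplectic form. Recall that $C^{\perp_s}$ is a linear code of length $2n$ whose generator matrix is precisely the parity-check matrix $H$ of $C$, since by definition $\mathbf{u} \in C^{\perp_s}$ iff $H\Omega\mathbf{u}^t = 0$ — wait, more carefully: $H$ is a parity-check matrix for $C$ in the \emph{ordinary} (Euclidean) sense, so $\mathbf{u}\in C$ iff $H\mathbf{u}^t = 0$. The key observation is that $C^{\perp_s} = C^{\perp_e}\Omega$ (applying the linear map $\Omega$ to the Euclidean dual), and $C^{\perp_e}$ has generator matrix $H$. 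Hence $C^{\perp_s}$ has generator matrix $H\Omega^t = -H\Omega$, equivalently $H\Omega$ (up to sign, which does not affect the row space).

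First I would make this identification precise: show $\mathbf{v}\in C^{\perp_s} \iff \mathbf{v}\Omega^t \in C^{\perp_e} \iff H(\mathbf{v}\Omega^t)^t = 0 \iff H\Omega\mathbf{v}^t = 0$ (using $\Omega^t = -\Omega$ and absorbing the sign). Thus $C^{\perp_s}$ is exactly the code with generator matrix $H' := H\Omega$. Next, apply Theorem \ref{mt} to the code $C^{\perp_s}$ with generator matrix $H'$: the condition $C^{\perp_s}\subseteq (C^{\perp_s})^{\perp_s}$ is equivalent to $H'\Omega (H')^t = O_m$. Then I would unwind: $(C^{\perp_s})^{\perp_s} = C$ (the symplectic form is non-degenerate, so double-dual is the identity on linear codes, and $\dim C + \dim C^{\perp_s} = 2n$), so the left-hand condition is exactly $C^{\perp_s}\subseteq C$, which is the symplectic dual-containing property.

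Finally I would simplify $H'\Omega(H')^t = (H\Omega)\Omega(H\Omega)^t = H\Omega\,\Omega\,\Omega^t H^t$. Using $\Omega\Omega = -I_{2n}$ and $\Omega^t = -\Omega$, this collapses to $H\Omega\cdot(-I_{2n})\cdot(-\Omega)H^t = H\Omega\Omega H^t = -H H^t$ — hmm, that gives $-HH^t$, not $H\Omega H^t$. Let me instead recompute carefully: $\Omega(H\Omega)^t = \Omega\Omega^t H^t = \Omega(-\Omega)H^t = -\Omega^2 H^t = I_{2n}H^t = H^t$, so $H'\Omega(H')^t = H\Omega\cdot H^t = H\Omega H^t$. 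Good — the sign works out and we land exactly on $H\Omega H^t = O_m$. So the computation $\Omega^3 = -\Omega$ is the one identity I need, and I would present it as the one-line linear-algebra fact $\Omega^t\Omega\Omega^t = \Omega$ (equivalently $\Omega^3=-\Omega$, since $\Omega^2=-I$).

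The main obstacle — really the only subtle point — is establishing cleanly that $C^{\perp_s}$ has $H\Omega$ as a generator matrix, i.e.\ the passage from "parity-check matrix of $C$" to "generator matrix of $C^{\perp_s}$" through the symplectic form; once that identification is in hand, the statement is an immediate application of Theorem \ref{mt} to $C^{\perp_s}$ together with the involutivity of the symplectic dual. An alternative, perhaps even shorter route that sidesteps generator matrices entirely: apply Theorem \ref{mt2}'s own target statement to $C^{\perp_s}$ in place of $C$ — but that is circular, so I would avoid it and stick with the generator-matrix identification above, which keeps the argument self-contained given only Theorem \ref{mt} and basic non-degeneracy of $\Omega$.
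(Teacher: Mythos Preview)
Your overall strategy --- identify $H\Omega$ as a generator matrix for $C^{\perp_s}$, apply Theorem~\ref{mt} to that code, and simplify $(H\Omega)\Omega(H\Omega)^t$ --- is valid and reaches the right conclusion, but it is not the paper's route. The paper gives a four-line direct argument that never invokes Theorem~\ref{mt}: it treats $H$ itself as the generator of $C^{\perp_s}$ (consistent with the symplectic relation $G\Omega H^t=0$ used immediately after the theorem), so that $x\in C$ is characterised by $H\Omega x^t=0$. Then $C^{\perp_s}\subseteq C$ holds iff every row $r$ of $H$ satisfies $H\Omega r^t=0$, which is $H\Omega H^t=O_m$. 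No double-dual, no auxiliary matrix $H'$, no $\Omega^3$ computation. Your approach buys you a clean reduction to the already-proved Theorem~\ref{mt}; the paper's buys brevity.

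There is, however, a genuine slip in your second paragraph. The chain
\[
\mathbf v\Omega^t\in C^{\perp_e}\;\Longleftrightarrow\; H(\mathbf v\Omega^t)^t=0
\]
is false: the right-hand condition says $\mathbf v\Omega^t\in\ker H=C$, not $\mathbf v\Omega^t\in\operatorname{rowspace}(H)=C^{\perp_e}$. You have swapped kernel and rowspace. And even had that equivalence held, the conclusion ``$H\Omega\mathbf v^t=0$'' would make $H\Omega$ a \emph{parity-check} matrix for $C^{\perp_s}$, not a generator --- the opposite of what you then assert. The correct justification is the one you already sketched informally in your first paragraph: $C^{\perp_s}=C^{\perp_e}\Omega$ and $H$ generates $C^{\perp_e}$, hence $H\Omega$ generates $C^{\perp_s}$. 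Drop the flawed chain, keep that one sentence, and the rest of your argument (Theorem~\ref{mt} applied to $H'=H\Omega$, then $(H\Omega)\Omega(H\Omega)^t=H\,\Omega\Omega\Omega^t\,H^t=H\Omega H^t$ using $\Omega\Omega^t=I_{2n}$) is correct.
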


\begin{proof}
    Let us assume \( C \) is a symplectic dual-containing code, which means \( C^{\perp_s} \subseteq C \). This gives us:
    \begin{align*}
        C^{\perp_s} \subseteq C
       & \Longleftrightarrow \forall x \in C^{\perp_s}, \quad x \in C\\
       & \Longleftrightarrow \forall x \in C^{\perp_s}, \quad H\Omega x^t = O_{m \times 1}\\
       & \Longleftrightarrow \text{for all rows } r \text{ of } H, \quad H\Omega r^t = O_{m \times 1}\\
       & \Longleftrightarrow H\Omega H^t = O_{m}.
    \end{align*}
\end{proof}

To determine the dual-containing property of two-generator QC codes, we need to construct a parity-check matrix. Our objective is to start with a generator matrix \( G \) of a two-generator QC code of length \( 2n \) and index \( 2 \) in the form \((\ref{2gm})\). We consider circulant matrices \( P_i \) generated by the polynomial \( p_i(x) \) for \( i = 1, 2 \), and circulant matrices \( Q_i \) generated by the polynomial \( q_i(x) \) for \( i = 1, 2 \) to form a parity-check matrix \( H \) of the form:
\begin{align} \label{2pm}
H = \begin{pmatrix}
P_1 & Q_1 \\
P_2 & Q_2
\end{pmatrix},
\end{align}

\noindent such that \( G\Omega H^T = O_{2n} \), where \( O_{2n} \) denotes the \( 2n \times 2n \) zero matrix.
\vskip 5pt
By solving \( G\Omega H^T = O_{2n} \), we derive the following equations:
\begin{align*}
A_1 \cdot Q_1^T &= B_1 \cdot P_1^T,\\ 
A_1 \cdot Q_2^T &= B_1 \cdot P_2^T, \\
A_2 \cdot Q_1^T &= B_2 \cdot P_1^T,\\
A_2 \cdot Q_2^T &= B_2 \cdot P_2^T.
\end{align*}

The generator matrix \( G \) in the form \((\ref{2gm})\) and the parity-check matrix \( H \) in the form \((\ref{2pm})\) may not always have full rank. Consequently, \( H \) does not always generate the dual QC code of \( C \). The condition \( G\Omega H^T = O_{2n} \) indicates that if a two-generator QC code \( C \) is generated by the matrix \( G \) in the form \((\ref{2gm})\), and another two-generator QC code \( C' \) is generated by the matrix \( H \) in the form \((\ref{2pm})\), then all codewords of \( C \) are orthogonal to those of \( C' \). However, \( C' \) is not always equal to \( C^{\perp_s} \), the symplectic dual of \( C \). If the matrix \( H \) satisfies \(\text{rank}(G) + \text{rank}(H) = 2n\), i.e., \( \dim(C) + \dim(C') =2n\), we can assert that \( H \) is the parity-check matrix of \( C \) that generates \( C^{\perp_s} \).
\vskip 5pt

Suppose  $g(x)$ is  a monic  polynomial such that $g(x)h(x)=x^n -1$. Then the polynomial $g^\perp(x)$  is defined as 
\[g^\perp(x)=x^{\deg(h)}h(x^{-1})/h(0).\]

\begin{example}\label{e1}{\em
We consider $R=\mathbb{F}_3[x]/\langle x^{15}-1 \rangle$, and
\[
x^{15}-1 = (x+2)^3(x^4 + x^3 + x^2 + x + 1)^3 \in \mathbb{F}_3[x].
\]

We take 
\[g_1(x) = (x+2)(x^4 + x^3 + x^2 + x + 1), ~~~~g_2(x) = (x+2)^3(x^4 + x^3 + x^2 + x + 1),\]
\[g_3(x) = (x^4 + x^3 + x^2 + x + 1)^2,~~\mbox{and}~~
g_4(x) = (x^4 + x^3 + x^2 + x + 1)\]
such that $g_i(x) \mid x^{15}-1$ for $i=1,2,3,4$. We also take $t_i(x) \in R$ for $i=1,2,3,4$ such that
\begin{align*}
t_1(x) &= 2x^{12} + 2x^{11} + 2x^{10} + 2x^9 + 2x^8 + 2x^7 + 2x^6 + 2x^5 + 2x^4 + 2x^3 + 2x^2 + 2x + 2, \\
t_2(x) &= 2x^{11} + 2x^{10} + 2x^9 + 2x^8 + 2x^7 + 2x^6 + 2x^5 + 2x^4 + 2x^3 + 2x^2 + 2x + 2, \\
t_3(x) &= 2x^9 + 2x^8 + 2x^7 + 2x^6 + 2x^5 + 2x^4 + 2x^3 + 2x^2 + 2x + 2, \\
t_4(x) &= 2x^7 + 2x^6 + 2x^5 + 2x^4 + 2x^3 + 2x^2 + 2x + 2.
\end{align*}

\noindent Then \(C\) is a two-generator QC code of length \(30\) and index \(2\) generated by \((a_1(x), b_1(x))\) and \((a_2(x), b_2(x))\), where 
$a_i(x) \equiv t_i(x)g_i(x) \mod (x^{15}-1)$ for $i=1,2$ and $b_j(x) \equiv  t_j(x)g_j(x) \mod (x^{15}-1)$ for $j=3,4$. Then the generator matrix $G$ is of the form $(\ref{2gm})$. 

\vskip 3pt

We consider $p_i(x) \equiv  g^\perp_i(x)\overline{t}_i(x) \mod (x^{15}-1)$ for $i=1,2$ and $q_j(x) \equiv  g^\perp_j(x)\overline{t}_j(x) \mod (x^{15}-1)$ for $j=3,4$. Then the parity-check matrix $H$ is of the form \((\ref{2pm})\). We can check that  \( G \Omega H^T = O_{30} \) and \(\text{rank}(G) + \text{rank}(H) = 30\). Thus $H$ generates the symplectic dual of the QC code $C$.
}\hfill$\square$
\end{example}

\vskip 5pt
Assuming that a parity-check matrix of the form~(\ref{2pm}) generates $C^{\perp_s}$, we derive the following necessary and sufficient condition for a two-generator QC code to be symplectic dual-containing.

\begin{theorem}\label{1}
Let \(C\) be a two-generator QC code of length \(2n\) and index \(2\) over $\mathbb F_q$. A parity-check matrix of this QC code \(C\) is of the form $(\ref{2pm})$. Then \(C^{\perp_s} \subseteq C\) if and only if the following conditions hold:
\[
P_1 Q_1^t = Q_1 P_1^t, \quad P_2 Q_2^t = Q_2 P_2^t, \quad \text{and} \quad P_1 Q_2^t = Q_1 P_2^t,
\]
where \(P_i^t\) and \(Q_i^t\) denote the transposes of \(P_i\) and \(Q_i\), respectively, for \(i=1,2\).
\end{theorem}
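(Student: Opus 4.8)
The plan is to mirror the proof of Theorem \ref{gproof}, but applied to the parity-check matrix $H$ instead of the generator matrix $G$. By Theorem \ref{mt2}, the dual-containing condition $C^{\perp_s} \subseteq C$ holds if and only if $H \Omega H^t = O_{2n}$, where $H$ is the parity-check matrix of $C$ of the form $(\ref{2pm})$. So the entire argument reduces to computing the block product $H \Omega H^t$ and equating it to the zero matrix.

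First I would carry out the explicit block multiplication. Writing
\[
H \Omega H^t = \begin{pmatrix} P_1 & Q_1 \\ P_2 & Q_2 \end{pmatrix} \begin{pmatrix} O_n & I_n \\ -I_n & O_n \end{pmatrix} \begin{pmatrix} P_1^t & P_2^t \\ Q_1^t & Q_2^t \end{pmatrix},
\]
the middle factor sends $(P_i \mid Q_i)$ to $(-Q_i \mid P_i)$, so that
\[
H \Omega H^t = \begin{pmatrix} -Q_1 P_1^t + P_1 Q_1^t & -Q_1 P_2^t + P_1 Q_2^t \\ -Q_2 P_1^t + P_2 Q_1^t & -Q_2 P_2^t + P_2 Q_2^t \end{pmatrix}.
\]
Setting this equal to $O_{2n}$ gives four block equations. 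The diagonal blocks yield $P_1 Q_1^t = Q_1 P_1^t$ and $P_2 Q_2^t = Q_2 P_2^t$ directly. For the off-diagonal blocks, I would observe that the lower-left block is exactly the transpose of the upper-right block, since $(-Q_1 P_2^t + P_1 Q_2^t)^t = -P_2 Q_1^t + Q_2 P_1^t = -(Q_2 P_1^t - P_2 Q_1^t)$; hence these two equations are equivalent, and together they are equivalent to the single condition $P_1 Q_2^t = Q_1 P_2^t$. Conversely, if the three stated conditions hold, each block of $H \Omega H^t$ vanishes, so $H \Omega H^t = O_{2n}$ and Theorem \ref{mt2} gives $C^{\perp_s} \subseteq C$.

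There is essentially no serious obstacle here; the argument is a routine transcription of Theorem \ref{gproof} with $(A_i, B_i)$ replaced by $(P_i, Q_i)$ and Theorem \ref{mt} replaced by Theorem \ref{mt2}. The one point that deserves a careful sentence is the standing assumption that $H$ of the form $(\ref{2pm})$ is genuinely a parity-check matrix of $C$ (equivalently, that it generates $C^{\perp_s}$), which requires $\operatorname{rank}(G) + \operatorname{rank}(H) = 2n$ as discussed before Example \ref{e1}; under that hypothesis Theorem \ref{mt2} applies verbatim. I would therefore state the proof briefly, referencing the computation above and the symmetry observation on the off-diagonal blocks, and note that it parallels the proof of Theorem \ref{gproof}.
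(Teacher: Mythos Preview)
Your proposal is correct and follows exactly the approach intended by the paper, which simply states that the proof parallels that of Theorem \ref{gproof} with Theorem \ref{mt2} in place of Theorem \ref{mt}. Your explicit block computation and the symmetry remark on the off-diagonal blocks are precisely the details one would fill in.
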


\begin{proof}
    The proof follows a similar approach to the proof of Theorem \ref{gproof}.
\end{proof}

A necessary and sufficient condition for two-generator QC codes of length \(2n\) and index \(2\) over \(\mathbb{F}_q\) that contain their duals can be expressed in terms of polynomials as follows.

\begin{theorem}\label{2}
Let \(C\) be a two-generator QC code of length \(2n\) and index \(2\) over $\mathbb F_q$. 
A parity-check matrix of this QC code \(C\) is of the form $(\ref{2pm})$.
Then \(C^{\perp_s} \subseteq C\) if and only if $
p_1(x)\overline{q}_1(x) - q_1(x)\overline{p}_1(x) \equiv  0 \mod (x^n - 1), p_2(x)\overline{q}_2(x) - q_2(x)\overline{p}_2(x) \equiv  0 \mod (x^n - 1),$ and $p_1(x)\overline{q}_2(x) - p_2(x)\overline{q}_1(x) \equiv  0 \mod (x^n - 1),$ where \(\overline{p}(x)\) denotes the transpose polynomial of \(p(x)\).
\end{theorem}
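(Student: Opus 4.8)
The plan is to reduce this theorem to the matrix version already proved in Theorem~\ref{1}, exactly as Theorem~\ref{prf} was reduced to Theorem~\ref{t1} in the one-generator case. By Theorem~\ref{1}, the condition $C^{\perp_s} \subseteq C$ is equivalent to the three matrix identities $P_1 Q_1^t = Q_1 P_1^t$, $P_2 Q_2^t = Q_2 P_2^t$, and $P_1 Q_2^t = Q_1 P_2^t$, where $P_i$ and $Q_i$ are the $n\times n$ circulant matrices generated by $p_i(x)$ and $q_i(x)$. So the entire content of the proof is to translate each of these three matrix equations into the corresponding polynomial congruence modulo $x^n - 1$.

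The key facts I would invoke are the ones established around Theorem~\ref{prf}: (i) if a circulant matrix $P$ is generated by $p(x)$, then its transpose $P^t$ is the circulant matrix generated by the transpose polynomial $\overline{p}(x) = x^n p(x^{-1})$; and (ii) the product of the circulant matrix generated by $f(x)$ with the circulant matrix generated by $h(x)$ is the circulant matrix generated by $f(x)h(x) \bmod (x^n-1)$, since circulant matrices of order $n$ form a ring isomorphic to $R = \mathbb{F}_q[x]/\langle x^n-1\rangle$. Applying (i) and (ii): $P_1 Q_1^t$ is the circulant matrix generated by $p_1(x)\overline{q}_1(x) \bmod (x^n-1)$ and $Q_1 P_1^t$ is generated by $q_1(x)\overline{p}_1(x) \bmod (x^n-1)$, so the first matrix identity holds iff $p_1(x)\overline{q}_1(x) - q_1(x)\overline{p}_1(x) \equiv 0 \bmod (x^n-1)$. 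The same argument handles the second identity $P_2 Q_2^t = Q_2 P_2^t$. For the third, $P_1 Q_2^t = Q_1 P_2^t$ translates to $p_1(x)\overline{q}_2(x) \equiv q_1(x)\overline{p}_2(x) \bmod (x^n-1)$; I note the statement as written has $p_1(x)\overline{q}_2(x) - p_2(x)\overline{q}_1(x)$, so I would either match the stated form by using the fact that the matrix identity is equivalent to its transpose (and $\overline{\,\cdot\,}$ interacts appropriately with transposition of the equation) or, more cleanly, verify that $Q_1 P_2^t$ being the transpose of $P_2 Q_1^t$ yields the claimed symmetric form; this is the one bookkeeping point to get right.

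Since every step is a direct transcription of arguments already spelled out in the proof of Theorem~\ref{prf}, the honest thing is to write exactly that: ``The proof follows a similar line of arguments as the proof of Theorem~\ref{prf}, replacing $A,B$ by $P_1,Q_1$, etc., and invoking Theorem~\ref{1} in place of Theorem~\ref{t1}.'' I would then spell out the one-line translation for each of the three congruences so the reader sees the correspondence explicitly.

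The main obstacle — really the only thing requiring care — is the index bookkeeping in the third condition: making sure the cross-term congruence is stated in a form genuinely equivalent to the matrix equation $P_1 Q_2^t = Q_1 P_2^t$, given that transposing a circulant-matrix equation and applying the transpose-polynomial operation must be tracked consistently. Everything else (the ring isomorphism between circulants and $R$, the behavior of $\overline{\,\cdot\,}$ under transposition) is routine and already used earlier in the paper.
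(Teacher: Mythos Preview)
Your proposal is correct and takes essentially the same approach as the paper: the paper's entire proof reads ``This proof follows a similar line of arguments as the proof of Theorem~\ref{prf},'' which is exactly the reduction to Theorem~\ref{1} plus the circulant--polynomial dictionary that you spell out. Your flagging of the cross-term bookkeeping (the stated $p_1\overline{q}_2 - p_2\overline{q}_1$ versus the direct translation $p_1\overline{q}_2 - q_1\overline{p}_2$ of $P_1Q_2^t = Q_1P_2^t$) is a legitimate point of care that the paper does not address.
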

\begin{proof}
   This proof follows a similar line of arguments as the proof of Theorem \ref{prf}.
\end{proof}

\begin{example}{\em 
Continuing from Example \ref{e1}, we can demonstrate that the two-generator QC code $C$ described in Example \ref{e1} meets both the necessary and sufficient conditions for the dual-containing property as stated in Theorem \ref{1} and Theorem \ref{2}. Consequently, this code is a dual-containing QC code of length $2n$ and index $2$ over $\mathbb{F}_3$.\hfill$\square$
}\end{example}

\begin{remark}{\em 
The duals of one-generator QC codes have been addressed in \cite{QC, iitr}. The structure of duals of two-generator quasi-cyclic codes is considerably more complex, primarily because the generator matrix involves eight polynomials. In this work, we identify symplectic self-orthogonal and symplectic dual-containing codes without explicitly computing the generators of the dual code. Minimum distance bounds for specific two-generator QC codes have been discussed in \cite{Gal, Q2}; establishing such bounds for \emph{general} two-generator QC codes remains an open problem. 
\hfill$\square$}\end{remark}

\section{QECCs from QC codes}
Most quantum codes studied in the literature are based on the well-known CSS construction \cite{CRSS98}. Quantum codes over $\mathbb{F}_q$ (where $q$ is a prime power) have also been constructed using Hermitian and symplectic inner products, as studied in \cite{A, Ket}. We recall the quantum code construction result from \cite{Ket} based on symplectic self-orthogonality, and state its analogue for the dual-containing property.
\vskip 3pt
To construct a quantum code from a linear code $C$ via the symplectic framework, one must ensure that either $C \subseteq C^{\perp_s}$ or $C^{\perp_s} \subseteq C$. The primary goal of this paper is to establish necessary and sufficient conditions that allow efficient construction of linear codes with the symplectic self-orthogonal or symplectic dual-containing property. Based on these two properties, we derive the corresponding results that we  use to construct quantum codes from our study.

\begin{theorem}
(\cite{0}) \label{Q}
Let \( C \) be a linear code of length \( 2n \) over \( \mathbb{F}_q \) with parameters \([2n, k]\). If \( C \subseteq C^{\perp_s} \), then there exists a quantum error-correcting code \( Q \) with parameters \( [[n, n-k, d_s]] \) over \( \mathbb{F}_q \), where \( d_s = \min\{w_s(c) \mid c \in (C^{\perp_s} \setminus C)\} \).
\end{theorem}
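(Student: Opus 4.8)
The plan is to reduce the statement to the standard CSS-type / symplectic stabilizer construction, which is exactly the content of the references the paper is pointing to, and to explain how the symplectic self-orthogonality hypothesis supplies the needed ingredients. First I would record the setup: $C$ is an $[2n,k]$ linear code over $\mathbb{F}_q$ with $C \subseteq C^{\perp_s}$, so that $\dim C^{\perp_s} = 2n - k$ and the chain $C \subseteq C^{\perp_s} \subseteq \mathbb{F}_q^{2n}$ makes sense; the quotient $C^{\perp_s}/C$ has dimension $(2n-k) - k$, but more relevantly $C^{\perp_s}$ itself has dimension $2n-k$, which is what controls the dimension of the quantum code. The key structural fact is that the symplectic form $\langle\cdot,\cdot\rangle_s$ on $\mathbb{F}_q^{2n}$ is non-degenerate and alternating, and that $C$ being symplectic self-orthogonal (totally isotropic) is precisely the algebraic condition needed to associate to $C$ an $\mathbb{F}_q$-linear stabilizer code.

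Next I would invoke the symplectic (stabilizer) construction of quantum codes over $\mathbb{F}_q$: to a totally isotropic subspace $S \subseteq \mathbb{F}_q^{2n}$ of dimension $k$ one associates a quantum code $Q$ encoding $n-k$ qudits into $n$, i.e. with parameters $[[n, n-k, d]]$, where the minimum distance $d$ is the minimum symplectic weight of a vector in $S^{\perp_s} \setminus S$ (the "pure/impure" distinction handled in the usual way, giving $d_s = \min\{w_s(c) : c \in C^{\perp_s}\setminus C\}$). Applying this with $S = C$ and $S^{\perp_s} = C^{\perp_s}$ yields exactly a $[[n, n-k, d_s]]$ code with $d_s$ as stated. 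The dimension count $n - k$ comes from $\tfrac12\big(2n - \dim C - (2n - \dim C^{\perp_s})\big)$... more simply, the number of logical qudits is $n - \dim_{\mathbb{F}_q} C = n - k$, since each generator of $C$ "uses up" one of the $n$ available qudits. This is the step where I would cite \cite{0} (and the companion development in \cite{Ket, A}) rather than re-derive the representation-theoretic machinery of the error group.

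The only genuine content to check, beyond quoting the construction, is the distance claim, and this is the step I expect to be the main obstacle to write cleanly. One must verify that errors $e \in \mathbb{F}_q^{2n}$ with $w_s(e) < d_s$ are correctable (or, if in $C$, act trivially): an error $e$ is undetectable precisely when $e \in C^{\perp_s}$, and it acts trivially on the code space precisely when $e \in C$; hence the code detects all errors of symplectic weight less than $\min\{w_s(c): c\in C^{\perp_s}\setminus C\}$ that are not already in $C$, giving the stated $d_s$. I would handle this by appealing to the standard error-detection criterion for symplectic stabilizer codes (again from \cite{0}), noting only that the symplectic inner product used here matches theirs up to the harmless identification $\mathbf{u}=(\mathbf{u}_1,\mathbf{u}_2)$. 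Since Theorem \ref{Q} is stated as a recalled result with an explicit citation, the write-up can legitimately be a short reduction: state the hypotheses, identify $C$ with a totally isotropic subspace, apply the symplectic construction, and read off the parameters.
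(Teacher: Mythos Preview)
Your proposal is correct and, in fact, goes beyond what the paper does: the paper provides no proof of this theorem at all, treating it as a recalled result from \cite{0} (with companion material in \cite{A,Ket}) and simply stating it with the citation. Your plan to invoke the standard symplectic stabilizer construction and read off the parameters is precisely the right reduction, and your final paragraph already anticipates that a short citation-based write-up is all that is warranted here.
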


We can also state the above result in terms of the dual-containing property.

\begin{theorem}\label{dc}
Let \( C \) be a linear code of length \( 2n \) over \( \mathbb{F}_q \) with parameters \([2n, k]\). If \( C^{\perp_s} \subseteq C \), then there exists a quantum error-correcting code \( Q \) with parameters \( [[n, k-n, d_s]] \) over \( \mathbb{F}_q \), where \( d_s = \min\{w_s(c) \mid c \in (C \setminus C^{\perp_s})\} \).
\end{theorem}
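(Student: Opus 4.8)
The plan is to derive this statement from Theorem \ref{Q} by passing to the symplectic dual. The key observation is that the symplectic inner product is nondegenerate, so for any linear code $D$ of length $2n$ over $\mathbb{F}_q$ we have $(D^{\perp_s})^{\perp_s} = D$ and $\dim(D) + \dim(D^{\perp_s}) = 2n$. Hence if $C$ has parameters $[2n, k]$ and satisfies $C^{\perp_s} \subseteq C$, then setting $D = C^{\perp_s}$ we obtain a linear code $D$ of length $2n$ with $\dim(D) = 2n - k$ and $D^{\perp_s} = C$, so that $D \subseteq D^{\perp_s}$. In other words, $D$ is symplectic self-orthogonal with parameters $[2n, 2n-k]$.

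Next I would apply Theorem \ref{Q} to $D$. With $k$ in the statement of Theorem \ref{Q} replaced by $2n - k$, that theorem yields a quantum code $Q$ with parameters $[[n,\, n - (2n-k),\, d_s]] = [[n,\, k - n,\, d_s]]$, where the distance is $d_s = \min\{ w_s(c) \mid c \in (D^{\perp_s} \setminus D) \}$. Substituting $D = C^{\perp_s}$ and $D^{\perp_s} = C$ turns this distance into $d_s = \min\{ w_s(c) \mid c \in (C \setminus C^{\perp_s}) \}$, which is exactly the quantity in the statement. This matches the claimed parameters $[[n, k-n, d_s]]$, completing the argument.

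The only genuine content is the dimension bookkeeping together with the involutive property $(C^{\perp_s})^{\perp_s} = C$ and the dimension identity $\dim(C) + \dim(C^{\perp_s}) = 2n$; both follow from nondegeneracy of the symplectic form $\langle \cdot, \cdot \rangle_s$, equivalently from the invertibility of $\Omega$. I do not expect any real obstacle here: the main point to be careful about is the direction of the set difference in the definition of $d_s$, since dualizing swaps the roles of $C$ and $C^{\perp_s}$, and one must check that $C \setminus C^{\perp_s}$ is nonempty precisely when $C^{\perp_s} \subsetneq C$, i.e. when $k - n > 0$, so that the distance is well defined exactly in the nontrivial case.
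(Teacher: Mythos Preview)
Your argument is correct and follows essentially the same route as the paper: set $D = C^{\perp_s}$, use nondegeneracy of the symplectic form to get $D \subseteq D^{\perp_s}$ with $\dim(D) = 2n-k$, and then apply Theorem~\ref{Q} to $D$ to obtain the parameters $[[n,k-n,d_s]]$ with $d_s = \min\{w_s(c) \mid c \in C \setminus C^{\perp_s}\}$. Your write-up is in fact a bit more careful than the paper's, which contains a couple of typos (it states the hypothesis as $C \subseteq C^{\perp_s}$ and cites Theorem~\ref{dc} where Theorem~\ref{Q} is meant).
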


\begin{proof}
Let \( C \) be a linear code of length \( 2n \) over \( \mathbb{F}_q \) with parameters \([2n, k]\), such that \( C^{\perp_s} \subseteq C \). Consider \( D = C^{\perp_s} \), which is a linear code of length \( 2n \) over \( \mathbb{F}_q \) with parameters \([2n, 2n-k]\). Since \( D = C^{\perp_s} \), it follows that \( D^{\perp_s} = C \). Therefore, \( C^{\perp_s} \subseteq C \) implies \( D \subseteq D^{\perp_s} \).
\vskip 3pt
By Theorem \ref{Q}, there exists a quantum error-correcting code \( Q \) with parameters \( [[n, n-(2n-k), d_s]] = [[n, k-n, d_s]] \), where \( d_s = \min \{ w_s(c) \mid c \in C \setminus C^{\perp_s} \} \).
\end{proof}

For ease of computation, we primarily consider one-generator quasi-cyclic codes \( C \) of the form \((r_1(x)g(x), r_2(x)g(x))\), where \( g(x) \mid x^n - 1 \) and \( r_1(x), r_2(x) \in R \). We observe that a quantum code constructed from a symplectic self-orthogonal QC code \( C \) of this form has a degree given by \(n-k = n-(n-\deg(g(x)))=\deg(g(x))\). The advantage of this choice is that it allows us to fix the degree of the quantum code to match the dimension parameter of the code we seek to improve. All computations are done using MAGMA \cite{Mag}. 

\begin{example}{\em
We consider \( q = 5 \) and \( n = 11 \). Then \( R = \mathbb{F}_5[x]/\langle x^{11} - 1 \rangle \). We take two polynomials \( r_1(x), r_2(x) \in R \), where
\[ r_1(x) = 4x^8 + 4x^7 + 4x^6 + 4x^5 + 4x^4 + 4x^3 + 4x^2 + 4x + 4, \]
\[ r_2(x) = 4x^6 + 2x^5 + 4x^4 + 2x^3 + 4x^2, ~~g(x) = 1. \]
Next, we consider two circulant matrices of size 11, \( A \) and \( B \), generated by \( r_1(x) \) and \( r_2(x) \) over \( \mathbb{F}_5 \). Then \( C \) is a QC code of length $22$ and index $2$ whose generator matrix is \( G = (A \mid B) \), where \(\mid\) represents the horizontal concatenation of the two circulant matrices \( A \) and \( B \). We note that \( AB^t = BA^t \), which implies \( C \) is a symplectic self-orthogonal code with parameters \([22, 11, 8]\) over \( \mathbb{F}_5 \). Therefore, by Theorem \(\ref{Q}\), we obtain a QECC with parameters \([[11, 0, 6]]\), which are record-breaking parameters. The previous record was \([[11, 0, 5]]\). This  code has been already updated to the  quantum code table \cite{gra}.
\hfill$\square$}\end{example}

\begin{example}{\em
We consider \( q = 3 \) and \( n = 13 \). Then \( R = \mathbb{F}_3[x]/\langle x^{13} - 1 \rangle \). We take $g(x)\mid x^{13}-1$ and \(  r_1(x), r_2(x) \in R \), where
\[ g(x) = 2x^6 + x^5 + 2x^4 + x^3 + x^2 + x + 2, \]
\[ r_1(x) = 2x^6 + 2x^5 + 2x^4 + 2x^3 + 2x^2 + 2x + 1, \]
\[ r_2(x) = 2x^6 + 2x^5 + 2x^4 + x^3 + 2x^2 + 2x + 2. \]
Next, we consider two circulant matrices of size $13$, \( A \) and \( B \), generated by \( g(x)r_1(x) \) and \( g(x)r_2(x) \) over \( \mathbb{F}_3 \). Then \( C \) is a QC code of length $26$ with index $2$ whose generator matrix is \( G = (A \mid B) \), where \(\mid\) represents the horizontal concatenation of the two circulant matrices \( A \) and \( B \). We note that \( AB^t = BA^t \), which implies \( C \) is a symplectic self-orthogonal code with parameters \([26, 7, 12]\) over \( \mathbb{F}_3 \). By Theorem \(\ref{Q}\), we obtain a QECC with parameters \([[13, 6, 4]]\), which are record-breaking parameters. The previous record was \([[13, 6, 3]]\). This  code has been already updated to the  quantum code table \cite{gra}.
\hfill$\square$}\end{example}

\begin{example}{\em
We consider \( q = 3 \) and \( n = 23 \). Then \( R = \mathbb{F}_3[x]/\langle x^{23} - 1 \rangle \). We take $g(x)\mid x^{23}-1$ and \(  r_1(x), r_2(x) \in R \), where
\[ g(x) = x^{12} + 2x^{11} + 2x^9 + x^8 + 2x^7 + x^6 + x^5 + x^3 + 1, \]
\[ r_1(x) = 2x^{11} + 2x^{10} + 2x^9 + 2x^8 + 2x^7 + 2x^6 + 2x^5 + 2x^4 + 2x^3 + 2x^2 + 2x + 2, \]
\[ r_2(x) = 2x^{10} + 2x^9 + 2x^8 + 2x^7 + 2x^6 + 2x^5 + 2x^4 + x^3 + 2x^2 + 1. \]
Next, we consider two circulant matrices of size 23, \( A \) and \( B \), generated by \( g(x)r_1(x) \) and \( g(x)r_2(x) \) over \( \mathbb{F}_3 \). Then \( C \) is a QC code of length $46$ with index $2$ whose generator matrix is \( G = (A \mid B) \), where \(\mid\) represents the horizontal concatenation of the two circulant matrices \( A \) and \( B \). We note that \( AB^t = BA^t \), which implies \( C \) is a symplectic self-orthogonal code with parameters \([46, 11, 21]\) over \( \mathbb{F}_3 \). By Theorem \(\ref{Q}\), we obtain a QECC with parameters \([[23, 12, 5 ]]\), which are record-breaking parameters. The previous record was \([[23, 12, 4 ]]\). This  code has been already updated to the  quantum code table \cite{gra}.
\hfill$\square$}\end{example}

\begin{example}{\em
We consider \( q = 3 \) and \( n = 16 \). Then \( R = \mathbb{F}_3[x]/\langle x^{16} - 1 \rangle \). We take $g(x)\mid x^{16}-1$ and \(  r_1(x), r_2(x) \in R \), where
\[ g(x) = 2x^6 + x^4 + 1, \]
\[ r_1(x) = 2x^9 + 2x^8 + 2x^7 + 2x^6 + 2x^5 + 2x^4 + 2x^3 + 2x^2 + 2x + 1, \]
\[ r_2(x) = 2x^9 + 2x^8 + x^7 + 2x^6 + x^5 + x^4 + 2x^3 + x. \]
Next, we consider two circulant matrices of size $16$, \( A \) and \( B \), generated by \( g(x)r_1(x) \) and \( g(x)r_2(x) \) over \( \mathbb{F}_3 \). Then \( C \) is a QC  code of length $32$ with index $2$ whose generator matrix is \( G = [A \mid B] \), where \(\mid\) represents the horizontal concatenation of the two circulant matrices \( A \) and \( B \). We note that \( AB^t = BA^t \), which implies \( C \) is a symplectic self-orthogonal code with parameters \([32, 10, 12]\) over \( \mathbb{F}_3 \). By Theorem \(\ref{Q}\), we obtain a QECC with parameters \([[16, 6, 5]]\), which are record-breaking parameters.
The previous record was \([[16, 6, 4 ]]\). This  code has been already updated to the  quantum code table \cite{gra}.
\hfill$\square$}\end{example}

\begin{example}{\em
By \cite[Theorem 6]{cal1996}, if  a quantum code with parameters   $ [[n,k,d]]$ exists then a quantum code with parameters $ [[n+1,k,d]]$ also exists, when $k > 0$. Therefore, from the above-constructed quantum code parameters  $ [[ 16, 6, 5 ]]$, we get a quantum code with parameters  $ [[ 17, 6, 5 ]]$ which is also new and breaks the previous record which is $ [[ 17, 6, 4 ]]$. This newly constructed code is  in the online quantum code table \cite{gra}.
\hfill$\square$}\end{example}

\section{Conclusion and Future work} 

In this work, we have studied one-generator and two-generator quasi-cyclic (QC) codes over \(\mathbb{F}_q\), where \(q\) is a prime power. We presented necessary and sufficient conditions for the symplectic self-orthogonality of one-generator QC codes in both matrix and polynomial forms, and used these to construct new quantum codes with record-breaking parameters. Extending this study to two-generator QC codes, we presented necessary and sufficient conditions for both symplectic self-orthogonality and the symplectic dual-containing property. For each factor \(g(x)\) of \(x^n - 1\), we choose two polynomials \(r_1(x)\) and \(r_2(x)\) to construct a quantum code from the one-generator QC codes. Unlike the commutative polynomial ring $\mathbb{F}_q[x]$, skew polynomial rings are not unique factorization domains; as a result, a given element can admit multiple inequivalent factorizations. This multiplicity increases the number of potential generator polynomials and hence the opportunity to construct new codes. This multiplicity increases the potential to find more factors and, consequently, more possibilities to construct codes. It would be interesting to study one-generator skew quasi-cyclic codes. Extend the symplectic self-orthogonality and dual-containing conditions to one-generator and two-generator skew quasi-cyclic codes over $\mathbb{F}_q$, exploiting the non-unique factorization in skew polynomial rings to produce new record-breaking quantum codes.

\section*{Acknowledgement}
Part of this work was carried out while T. Bag was with Inria, ENS de Lyon and supported by the European Research Council (ERC Grant AlgoQIP, Agreement No. 851716) and by the Agence Nationale de la Recherche (ANR) under the France 2030 program, grant ANR-22-PETQ-0006. T. Bag is grateful to Prof. Markus Grassl for numerous discussions on quantum codes. D. Panario was supported by the Natural Sciences and Engineering Research Council of Canada (NSERC), grant number RPGIN-2018-05328. We sincerely thank the reviewer and editor for their valuable and constructive feedback, which greatly helped improve the clarity and overall quality of this work.

\section*{Conflict of interest: } The authors declare that there are no conflicts of interest.

\end{document}